
\documentclass[10pt, conference, letterpaper]{IEEEtran}
\ifCLASSINFOpdf
\else
\fi
\hyphenation{op-tical net-works semi-conduc-tor}
\usepackage{amssymb,amsmath, amsthm, latexsym, epstopdf,graphics, color}
\usepackage{todonotes}
\usepackage{tikz}
\usetikzlibrary{arrows}
\usepackage{fancyhdr}
\usepackage{subfig}
\usepackage{amsmath}
\usepackage{graphicx}
\usepackage{epstopdf}

\usepackage{makecell}

\pagestyle{empty}
\theoremstyle{definition}
\newtheorem{definition}{Definition}

\newtheorem{example}{Example}

\newtheorem{proposition}{Proposition}

\begin{document}
%
\title{A Mood Value for Fair Resource Allocations}



%
\author{\IEEEauthorblockN{Francesca Fossati\IEEEauthorrefmark{1},
Stefano Moretti\IEEEauthorrefmark{2},
Stefano Secci\IEEEauthorrefmark{1}
}
\IEEEauthorblockA{\IEEEauthorrefmark{1}Sorbonne Universit\'{e}s, UPMC Univ Paris 06, UMR 7606, LIP6, 75005 Paris, France. Email: \{firstname.lastname\}@upmc.fr}
\IEEEauthorblockA{\IEEEauthorrefmark{2} CNRS UMR7243, PSL, Universit\'{e} Paris-Dauphine, Paris, France. Email: stefano.moretti@lamsade.dauphine.fr.}}


\maketitle

\begin{abstract}
	In networking and computing,
	resource allocation is typically addressed using classical sharing protocols as, for instance, the proportional division rule, the max-min fair allocation, or other solutions inspired by cooperative game theory.
	In this paper, we argue that, under awareness about the available resource and other users' demands, in a cooperative setting such classical resource allocation approaches, as well as associated notions of fairness, show important limitations. 
	We identify in the individual satisfaction rate the key aspect of the challenge of defining a new notion of fairness and, consequently, a resource allocation algorithm more appropriate for the cooperative context.  We generalize the concept of user satisfaction 
	considering the set of admissible solutions for bankruptcy games. We adapt the Jain's fairness index to include the new user satisfaction rate. Accordingly, we propose  a new allocation rule we call  \lq Mood Value\rq. For each user it equalizes our novel game-theoretic definition of user satisfaction with respect to a distribution of the resource. We test the mood value and the new fairness index through extensive simulations
	showing how they better support the fairness analysis.  
\end{abstract}
\section{Introduction}

In communication networks and computing systems, resource allocation (in some contexts also referred to as resource scheduling, pooling, or sharing) is a phase, in a network protocol or system management stack, when a group of individual users or clients have to receive a portion of the resource in order to operate a service. Resource allocation becomes a challenging problem when the available resource is limited and not enough to fully satisfy users' demand. In such situations, resource allocation algorithms need to ensure a form of fairness. Such situations emerge in a variety of contexts, such as wireless  access~\cite{saad-debbah,cath}, competitive routing~\cite{orda}, transport control~\cite{proutiere-server}. 

The common methodology adopted in the literature is to, on the one hand, determine allocation rules such that they satisfy desirable properties~\cite{thomson}, and, on the other hand, analyse the fairness of a given allocation through indices, the most commonly used being the Jain's index~ \cite{jain}. 
Allocation rules and indices of fairness are commonly justified by some fairness criteria. For instance, among two equivalent users demanding the same amount of resource, it makes sense not to discriminate and to give to each of them the same portion of the resource. In some cases, it can be desirable to guarantee at least a minimum amount of the resource so that the maximum number of users can be served. 

In the networking literature, the resource allocation problem is historically solved as a single-decision maker problem in which users are possibly not aware of the other users' demands and of the total amount of available resource. It follows that the most natural and intuitive way to quantify the user satisfaction is through the proportion of the demand that is satisfied by an allocation. Large literature exists indeed in the networking area on proportional resource allocations for many practical situations, from wireless networks to transport connection management~\cite{ cath,orda,proutiere-server}. 

In this paper, we are particularly interested instead in cooperative networking contexts such that users can be aware of other users' demands and the available amount. As such, rational users shall compute their satisfaction also based on the presence of other users. In fact, such networking contexts with demand and resource availability awareness are making surface in wired and wireless network environments with an increasing level of programmability, i.e., using software-defined radio and network platforms that expose novel (northbound) interfaces to users to disseminate information and pilot network resource allocations. Our main idea is defining a new notion of user satisfaction for such interactive resource allocation situations with demand and resource awareness. 

	\begin{figure}[t] 
		\centering
		\includegraphics[scale=0.3]{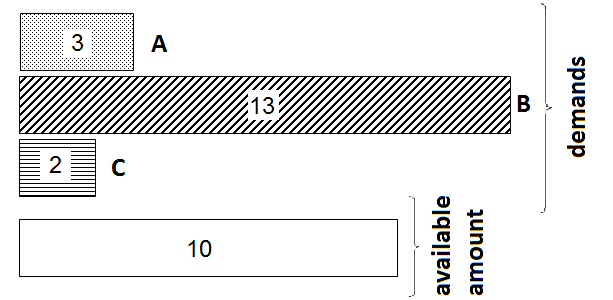}
		\caption{A critical resource allocation situation example}
        \vspace{-0.5cm}
		\label{es}
	\end{figure}
	
Let us briefly clarify our motivation with the following allocation example.  
A user $i$ asks a quantity of resource that is bigger than the resource itself (as B in Fig. \ref{es}). Classical fairness indices~\cite{jain},~\cite{atk},~\cite{alpha} tend to qualify the user satisfaction as maximum when $i$ obtains exactly what it asks. In the case where $i$ asks more than the available amount, it cannot reach the maximum satisfaction due to the fact that its demand exceeds the available resource. 
Instead, in demand and resource awareness conditions, it would be more reasonable that its satisfaction is maximum when it obtains all the available resource. Furthermore, if all the other users together ask a quantity of good inferior to the resource, a minimum portion of it, equal to the difference between the resource and the sum of the demands of all the others, is guaranteed to $i$. Under a dual reasoning, it also appears more acceptable that the minimum satisfaction of a user is reached when it receives the minimum portion of the available resource, instead of when it receives zero.
If users are in complete information context the classical approach can lead to not reasonable outcomes.

In this perspective, in order to better describe the user satisfaction as a function of the available resource, and to capture the interactions due to the networking context (e.g., networked users may be aware of respective demands, may ally in the formulation of their demands, etc), we propose to model the resource allocation problem as a coalitional game. Accordingly, we define a new satisfaction rate for users, able to adapt to various configurations of the demands. Furthermore, we define a new resource allocation rule, called the \lq Mood Value\rq, based on the idea that the most fair allocation is the one that equalizes the satisfaction of each player.  Indeed, regardless of the level of satisfaction, each player is not discriminated if its satisfaction is the same than the one of all the others. Choosing this allocation, users, who have the chance to recover informations about the other users and the available resource, have the feeling to receive a fair portion of the resource. We also provide an interpretation of this approach positioning it with respect to classical traffic theory~\cite{prop}.

The paper is organized as follows. Section~\ref{background} presents the state of the art on the topic. In Section~\ref{satisf} a new satisfaction rate is proposed. In Section~\ref{moodvalue} the mood value and a new fairness index are described. In Section~\ref{traffictheory} we provide an interpretation of the mood value with a traffic theory methodology, i.e. as result of the maximization of an appropriate utility function. Section~\ref{results} presents some numerical examples. Finally, Section~\ref{concl} concludes the paper.

\section{Background}
\label{background}

A resource allocation problem can be characterized by a pair $(c, E)$, in which $c$ is the vector of demands (claims) from $n$ users (claimants) and $E$ is the resource (estate) that should be shared between them. The set of users is $N=\{1,...,n\}$. The resource allocation is a challenging problem when $E$ is not enough to satisfy all the demands ($\sum\limits_{i=1}^n c_i \geq E$). 
An allocation $x \in \mathbb R^n$ is a solution vector that satisfies three basic properties:
\begin{itemize}
	\item \textit{Non-negativity}: each user should receive at least zero.
	\item  \textit{Demands boundedness}: each user cannot receive more than its demand.
	\item \textit{Efficiency}: the sum of all allocations should be $E$.
\end{itemize} An allocation rule is a function that associates a unique allocation vector $x$ to each $(c ,E)$.

\subsection{Classical resource allocation rules}
Many resource allocation rules are proposed in the literature and each of them is characterized by a set of properties that justify the use of the given rule in order to find a solution of the allocation problem~\cite{thomson}. In computer networks, the most well-known rules are: 
the proportional rule and the weighted proportional rule~\cite{prop}, the max-min fair allocation (MMF)~\cite{mmf},~\cite{mmf1}
,  and the $\alpha$-fair allocation~\cite{alpha}. Each of these allocation rules, result of an optimization problem and/or an iterative algorithm, follows a fairness criterion.  

The  \textit{weighted proportional allocation rule} is based on the idea that a logarithmic utility function captures well the individual evaluation of the worth of the resource~\cite{prop}. One way to compute it is via the maximization of $\sum\limits_{i=1}^{n} w_i\log x_i$ subject to demand boundness and efficiency constraints. When  
$w_i$ is equal to $1$ the resulting allocation is called simply proportional and when $w_i$ is equal to $c_i$ we obtain the allocation that actually produces allocations proportional to the demands; hence in the following, we refer to the latter rule as \lq proportional\rq \ instead of the previous (not weighted) one.  

The idea behind the \textit{max-min fairness (MMF) allocation} is to maximize firstly the minimum allocation; secondly, the second lowest allocation, and so on~\cite{mmf,mmf1}. This solution coincides with the only feasible allocation such that, if the allocation of some users is increased, the allocation of some other users with smaller or equal amount is decreased. 

More generally, it is possible to obtain a family of allocation rules maximizing a parametric utility function. The \textit{$\alpha$-fair utility} function is defined as $\sum\limits_{i=1}^n \frac{x_i^{(1-\alpha)}}{1-\alpha}$ \cite{alpha}. If $\alpha\to 1$ the solution of the optimization problem coincides with the weighted proportional allocation with $w_i$ equal to $1$, if $\alpha=2$ with the minimum delay potential allocation, that is the allocation obtained minimizing the total potential delay $\sum\limits_{i=1}^{n} ( \frac{1}{x_i}) $ \cite{delay}, and if $\alpha\to \infty$ with the max-min fair allocation. 

\subsection{Game theoretical allocation rules}

Recently game theory has been applied to communication systems in order to model network interactions. For example, in \cite{cop} a cooperative game model is proposed to select a fair allocation of the transmission rate in multiple access channels and in \cite{cop2} the authors studied, using coalitional game theory, the cooperation between rational users in wireless networks.
 
Moreover, it is possible to analyze the allocation problem as a Transferable Utility (TU) game \cite{net,nucleolus,bank}, which is defined as a pair $(N,v)$, where $N=\{1,\dots,n\}$ denotes the set of \textit{players} and $v:2^{N}\to \mathbb{R}$
is the {\it characteristic function},
(by convention, $v(\emptyset)=0$). Bankruptcy games~\cite{thomson}, in particular, deal with situations where the number of claimed resource exceeds that available. A \textit{Bankruptcy game} is a TU-game $(N,v)$ in which the value of the coalition is given by 
\begin{equation}
{v(S)= \max\{E-\sum_{i \in N \setminus S} c_i, 0\}}
\end{equation}
where $E \geq 0$ represents the estate to be divided and $c \in \mathbb{R}_+^N$ is a vector of claims satisfying the condition $\sum_{i \in N} c_i > E$ \cite{game,aumann}.
The bankruptcy game is \textit{superadditive}, that is: 
\begin{equation} 
{v(S \cup T) \geq v(S) + v(T)}, \ \ \ \forall S, T \subseteq N \vert S \cap T = \emptyset 
\end{equation} 
 it is also \textit{supermodular} (or, equivalently, \textit{convex}), that is:
 \begin{equation}
 { v(S \cup T) + v(S \cap T)  \geq v(S) + v(T) } \ \forall S, T \subseteq N
 \end{equation} 

A classical set-value solution for a TU-game is the \textit{core} $C(v)$, which is is defined as the
set of \textit{allocation vectors} $x \in \mathbb{R}^N$ for which no coalition has an
incentive to leave the grand coalition $N$,~i.e.:
\begin{equation}{
	C(v)=\{ x \in \mathbb{R}^N: \sum_{i \in N} x_i = v(N),  \sum_{i \in S} x_i \geq v(S) \quad \forall S \subset N\}.} \end{equation}


%
A \emph{one-point solution} (or simply a \emph{solution}) for a
class $\mathcal{C}^N$ of coalitional games is a function $\psi:
\mathcal{C}^N \rightarrow \mathbb{R}^{N}$ that assigns a payoff
vector $\psi(v) \in \mathbb{R}^{N}$ to every coalitional game in the
class. A well-known solution for TU-games is the \textit{Shapley value} \cite{shapley} $\phi(v)$ of a game $(N,v)$, defined as the weighted mean of the players' marginal contributions over all possible coalitions and computed as follows:
\begin{equation}
\phi_i(v)= \sum_{S \subseteq N: i \in S} w_i(S)
(v(S)-v(S \setminus \{i\})),
\end{equation}
 with $ w_i(S)=\frac{(s-1)!(n-s)!}{n!} $ where $s$ denotes the cardinality of $S \subseteq N$.
%
Another well studied solution for TU-games is the nucleolus, based on the idea of minimizing the maximum discontent \cite{nucl}.
Given a TU-game $(N,v)$ and an allocation $x \in \mathbb{R}^N$, let $e(S,x)= v(S) - \sum_{i \in S} x_i$ be the \textit{excess} of coalition $S$ over the allocation $x$, and let $\leq_L$ be the {\it lexicographic} order on $\mathbb{R}$. Given an imputation $x$, $\theta(x)$ is the vector that  arranges  in  decreasing  order  the  excess of  the $2^n-1$ non-empty coalitions  over  the  imputation
$x$.
The \textit{nucleolus} $\nu(v)$ is defined as the  \textit{imputation}  $x$ (i.e., $\sum_{i \in N} x_i = v(N)$ and $x_i \geq v(\{i\})$ for each $i \in N$) such that $\theta(x) \leq_L \theta(y)$ for all $y$ imputations of the game $v$.

 Given a bankruptcy game, many other solutions can be proposed \cite{thomson}. As already introduced in the previous section, the \textit{proportional allocation} assigns to player  $i$ an allocation equal to $E \cdot c_i /  \sum\limits_{i=1}^n c_i$.
   For example, it is worth mentioning the  \textit{Constrained Equal Loss} (CEL) allocation that divides equally the difference between the sum of the demands and $E$, under the constraint that no player receives a negative
amount.
   
%
%

\subsection{Fairness indices}

The evaluation of the fairness of the allocations, used as an important system performance metric especially in networking, can be useful to discriminate among allocation rules and to evaluate the level of \lq justice\rq\ in the  repartition of the resources. 
Jain~\cite{jain} introduces a formula aimed at providing a quantitative measure of  the  fairness  of  a resource  sharing  allocation. 
\begin{definition}[Jain's index]
	Given an allocation problem $ (c,E) $ and an allocation $  x $, the \textit{Jain's fairness index} is:\begin{equation}
	\label{jain}
	{J=
	\biggl[\sum\limits_{i=1}^n \bigl(\frac{x_i}{c_i}\bigr) \biggr]^2 
	 \bigg/\biggl[ n\sum\limits_{i=1}^n \bigl(\frac{x_i}{c_i}\bigr)^2\biggr] } 
	\end{equation} 
\end{definition}

The Jain's index is bounded between $ \frac{1}{n} $ and $1$ \cite{jain}. The maximum fairness is measured when all the users obtain the same fraction of demand and the minimum fairness is measured when it exists only one user that receives all the resource. 
The Jain's index has the following good properties: 
\begin{itemize}
	\item \textit{Population size independence}: applicable to any user set, finite or infinite.
	\item \textit{Scale and metric independence}: not affected by the scale.
	\item \textit{Boundedness}: can be expressed as a percentage.
	\item\textit{Continuity}: able to capture any change in the allocation.
\end{itemize}

The index considers the proportion of demand and it gives the maximum fairness to the allocation for which all the users receive the same proportion of the demand, regardless of the type of allocation problem, it suggests to allocate the resources in a proportional way even when this allocation is not the most suitable to solve the problem. 
Another well-know index of fairness is the Atkinson's index \cite{atk}; contrary to the Jain's index, it measures the degree of inequality of a given allocation, taking value equal to 0 when the system is $100\%$ fair in the MMF sense, and 1 when it is totally unfair.



\begin{example}

	Let $(c,E)$ be the situation of Fig.~\ref{es} with $ c=(3,13,2)$ and $ E=10 $. The discussed allocation rules provide values in Table~\ref{tab1} along with the Jain's index and 1-Atkinson's index in order to have a measure of fairness. 

\begin{table}[h!]
   \vspace{-0.25cm}
	\footnotesize
	\begin{center}
		\begin{tabular}{c|c|c|c|c|c}
			\rule[-6pt]{0mm}{18pt}
			User demands &Prop. &MMF& Shapley &Nucleolus & CEL\\
			\hline
			\rule[-6pt]{0mm}{18pt}
			A: 3 &1.67&3&1.5&1&0\\
			\rule[-6pt]{0mm}{10pt}
			B: 13&7.22&5&7.5&8&10\\
			\rule[-6pt]{0mm}{10pt}
			C: 2&1.11&2&1&1&0\\
			\hline			
			Jain's index& 1& 0.882& 0.995&0.946&0.333\\
			Atkinson's index & 0.844 & 0.965 & 0.821 & 0.777 & 0.333\\
		\end{tabular}		
		\caption{Allocation rules: comparison ($E=10$, cf. Fig. \ref{es}).}
        \vspace{-0.5cm}
		\label{tab1}
	\end{center}	
\end{table}
\end{example}
%
%

The axiomatic theory of fairness proposed in~\cite{axiom} shows that it exists an unique family of fairness measures, which includes the Jain's and the Atkinson's indices, satisfying a set of reasonable axioms.
In the rest of the paper, we  consider only the Jain's index because it is the one classically used in networking applications. MMF-driven inequality indices find their most appropriate use in socio-economical contexts, because they are linked to the concept of welfare of an income distribution. Furthermore the Jain's index is based on the idea of summarizing the information about the users' satisfaction, which is close to our methodology of redefining users' satisfaction under demand and resource awareness, as discussed in the following section. 


\section{From demand fraction satisfaction to game theoretical satisfaction}
\label{satisf}


In this section, we propose a game-theoretic approach to evaluate the satisfaction of a user for an allocation. 

\subsection{User satisfaction rate}

A crucial issue in resource allocation is to jointly:

\begin{itemize}
	\item find the best solution in terms of a certain goal;
	\item evaluate its fairness by referring to a fairness index.
\end{itemize}
With this purpose, it is important to evaluate the individual satisfaction rates and to summarize the information given by each of them with a global fairness index.

A natural way to quantify the satisfaction of a user, as proposed by Jain, is through the proportion of the demand that is satisfied by an allocation \cite{jain}.  

\begin{definition}[Demand Fraction Satisfaction rate]
	Given the user $i$ with demand $c_i$ and an allocation $ x_i $, the \textit{Demand Fraction Satisfaction (DFS) rate} of $i$ is:
	\begin{equation}\label{fd}	
	{
		DFS_i =\frac{x_i}{c_i}.}
	\end{equation}\end{definition}
This rate takes value between 0 and 1 since it represents the percentage of the demand that is satisfied.

Unavoidably, this way to quantify the user satisfaction makes the weighted proportional allocation the fairest one since it allocates proportionally to the demand. There are, however, situations in which the common sense does not suggest to allocate in a proportional way; e.g., if there is a big gap between the demands, in order to protect the 
\lq weaker\rq \ users and guarantee them a minimum portion of the estate, the MMF allocation can be preferable. Furthermore, as mentioned in the introduction, the presence of other users should rationally be considered not to distort the satisfaction of each user, in case of awareness about other users' demand and the available demands. 


For these reasons, we aim at defining an alternative satisfaction rate such that it satisfies the following two properties we name demand relativeness and relative null satisfaction:
\begin{itemize}
	\item \textit{Demand relativeness}: a user is fully satisfied when it receives its maximal right, based on the available resource;
	\item \textit{Relative null satisfaction}: a user has null satisfaction when it receives exactly its minimal right, based on other users' demands and the available resource.
\end{itemize}
The minimal right for a player is the difference between the available amount and the sum of the demands of the other users (i.e., taking a worst case assumption that the others get the totality of their demand), and the maximal right is equal to the maximum available resource, i.e., $c_i$ if $c_i<E$, or it is equal to E otherwise. Remembering the definition of the characteristic function of a bankruptcy game we have that:
\begin{itemize}
	\item the \textit{minimal right} for player $i$ is  $v(i)$
	\item the \textit{maximal right} for player $i$ is  $v(N)-v(N\setminus i)$
\end{itemize}

Thus we introduce the \lq player satisfaction (PS) rate\rq, which  satisfies the above two properties by considering the value of the bankruptcy game associated to the allocation problem.

\begin{definition}[Player Satisfaction Rate]	
	Given a bankruptcy game such that $ \sum\limits_{i=1}^n c_i>E $ and an allocation $ x_i $, the {\it{Player Satisfaction (PS) rate}} for $i$ is:
	\begin{equation}\label{player}{
		PS_i=\dfrac{x_i-min_i}{max_i-min_i},}
		\end{equation}
	where: $min_i=v(i)$, $max_i=v(N)-v(N\setminus i)$.
	If $ \sum_{i=1}^n c_i=E $ the player satisfaction rate is $ PS_i=1$, $ \forall i \in N$. 
\end{definition}

$PS_i \in [0,1]$ if the allocation belongs to the core (see Proposition~\ref{prop1}). Moreover it \lq corrects\rq \ $DFS_i $ since it replaces the interval of possible values $ [0,c_i] $ for $ x_i $ with the interval $[min_i,max_i]$. Consequently, if for the DFS rate the maximum satisfaction for $i$ is measured when it gets $c_i$ and the minimum when it gets 0, with PS,  $i$ is measured to be totally satisfied when it gets $ max_i $ and totally unsatisfied when it gets $min_i$.

\begin{example}
	Consider $(c,E)$ of Example 1 (see Fig.\ref{es}) and the corresponding bankruptcy game model. It holds:\\
	Proportional allocation: $ DFS_2=0.555 $  and $ PS_2=0.444 $\\
	MMF allocation: $ DFS_2=0.3846$  and $ PS_2=0 $.\\
	In both cases the PS rate shows that player 2 is less satisfied than what expected with the DFS rate. This is due to the fact that the game guarantees player 2 to get at least 5. 	
\end{example}
The following propositions show some interesting properties of the PS rate.
\begin{proposition} \label{prop1}
	If the allocation $x$ belongs to the core of the bankruptcy game, $PS_i \in [0,1] \ \ \forall i \in N$.	
\end{proposition}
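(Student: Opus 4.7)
The plan is to derive the two inequalities $\min_i \le x_i \le \max_i$ directly from the core conditions, applied to the singleton coalition $\{i\}$ and to its complement $N\setminus\{i\}$.

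First, I would recall that by definition $x$ belongs to the core means $\sum_{j\in N} x_j = v(N)$ together with $\sum_{j\in S} x_j \ge v(S)$ for every $S\subset N$. Applying the latter to $S=\{i\}$ yields $x_i \ge v(i) = \min_i$, which gives the lower bound. For the upper bound, I would apply the same inequality to the coalition $S = N\setminus\{i\}$, obtaining $\sum_{j\ne i} x_j \ge v(N\setminus\{i\})$; combining this with efficiency $\sum_{j\in N} x_j = v(N)$ gives $x_i = v(N) - \sum_{j\ne i} x_j \le v(N) - v(N\setminus\{i\}) = \max_i$. Together these two bounds imply $x_i-\min_i \ge 0$ and $\max_i - x_i \ge 0$, so the ratio $PS_i = (x_i-\min_i)/(\max_i-\min_i)$ lies in $[0,1]$ as soon as the denominator is strictly positive.

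The only remaining subtlety is the well-definedness of $PS_i$, i.e.\ ensuring $\max_i \ge \min_i$ so that the formula makes sense and the inclusion $PS_i\in[0,1]$ is not vacuous. Here I would invoke the supermodularity (convexity) of the bankruptcy game, already stated in the Background: for any $i\in N$,
\begin{equation*}
v(N) - v(N\setminus\{i\}) \;\ge\; v(\{i\}) - v(\emptyset) \;=\; v(i),
\end{equation*}
so $\max_i \ge \min_i$. In the degenerate case $\max_i = \min_i$, the core forces $x_i = \min_i = \max_i$ (by the two bounds above), and one can set $PS_i = 1$ by convention, consistently with the case $\sum_i c_i = E$ already treated in the definition.

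The argument is essentially a direct translation of the core inequalities into bounds on $x_i$, so I do not expect any real obstacle beyond the degenerate-denominator caveat. The key insight worth emphasising in the write-up is the duality between the singleton coalition (giving the lower bound / minimal right) and its complement (giving, via efficiency, the upper bound / maximal right), which is exactly what makes the PS rate a natural rescaling of $x_i$ inside the core.
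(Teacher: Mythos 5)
Your proof is correct and follows essentially the same route as the paper: the core inequality for the singleton $\{i\}$ gives $x_i\ge v(i)=\min_i$, and the inequality for $N\setminus\{i\}$ combined with efficiency gives $x_i\le v(N)-v(N\setminus\{i\})=\max_i$, from which $PS_i\in[0,1]$ follows. Your treatment is in fact slightly more careful than the paper's, since you also justify $\max_i\ge\min_i$ via supermodularity and handle the degenerate-denominator case explicitly.
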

\begin{proof}
	If a solution $x$ belongs to a core it holds:
	$ x_i\geq v(i)$ and $x_i\leq v(N)-v(N\setminus i)$. Thus $v(i)$ and $ v(N)-v(N\setminus i) $ are the minimum and the maximum value that an allocation in the core can take.
	If $x_i=v(i)=min_i$ then $ PS_i=0 $, if $x_i=v(N)-v(N\setminus i)=max_i$ then $ PS_i=1$.
\end{proof}

\begin{proposition}\label{relaz}
	It is possible to summarize the bankruptcy regimes of the $PS$ rate in four possible cases as in Table~\ref{casi}.
	\begin{table}[h!]
    \vspace{-0.35cm}
		\begin{center}
			\begin{tabular} {c|c|r||c|r}
				\rule[-6pt]{0mm}{18pt}
				& \multicolumn{2}{c||}{$c_i< E$} & \multicolumn{2}{c}{$c_i\geq E$} \\
				& \textit{PS} & \textit{case} & \textit{PS} & \textit{case} \\
				\hline
				\rule[-6pt]{0mm}{18pt}
				$v(i)=0$ & $\frac{x_i}{c_i}$ & \textsc{Gm} &$\frac{x_i}{E}$ & \textsc{Gg}\\
				\rule[-6pt]{0mm}{18pt}
				$v(i)\neq0 $&$ \frac{x_i-v(i)}{c_i-v(i)}$& \textsc{Mm} & $\frac{x_i-v(i)}{E-v(i)}$ & \textsc{Mg}\\
			\end{tabular} 
		\end{center}
		\caption{Value of PS in the four possible cases.}
		\label{casi}
        \vspace{-0.3cm}
	\end{table}	
\end{proposition}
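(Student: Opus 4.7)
The plan is a direct computation that unpacks $\min_i=v(i)$ and $\max_i=v(N)-v(N\setminus i)$ from the bankruptcy characteristic function, then substitutes into the definition of $PS_i$. Since $N\setminus N=\emptyset$, the grand coalition value is $v(N)=\max\{E,0\}=E$, so the only nontrivial quantity to determine is $v(N\setminus i)=\max\{E-c_i,0\}$, which splits naturally according to the sign of $E-c_i$.

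First I would handle the \emph{column split} based on $c_i$ versus $E$. If $c_i<E$, then $E-c_i>0$, hence $v(N\setminus i)=E-c_i$, which yields $\max_i=E-(E-c_i)=c_i$. If $c_i\geq E$, then $E-c_i\leq 0$, hence $v(N\setminus i)=0$, which yields $\max_i=E$. (At the boundary $c_i=E$ both expressions agree, so the case distinction is consistent.)

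Next I would handle the \emph{row split} based on $v(i)$. By definition $\min_i=v(i)$, so if $v(i)=0$ the formula $PS_i=(x_i-\min_i)/(\max_i-\min_i)$ collapses to $x_i/\max_i$, and if $v(i)\neq 0$ it remains $(x_i-v(i))/(\max_i-v(i))$. Combining with the two values of $\max_i$ obtained in the previous step gives exactly the four entries of Table~\ref{casi}: $x_i/c_i$ in case \textsc{Gm}, $x_i/E$ in case \textsc{Gg}, $(x_i-v(i))/(c_i-v(i))$ in case \textsc{Mm}, and $(x_i-v(i))/(E-v(i))$ in case \textsc{Mg}.

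There is essentially no hard step; the only point requiring a line of justification is the evaluation of $v(N\setminus i)$ and the observation that $v(N)=E$ under the bankruptcy hypothesis $\sum_{i\in N}c_i>E$. One may also briefly note that the four cases exhaust all possibilities since the two binary splits ($c_i<E$ versus $c_i\geq E$, and $v(i)=0$ versus $v(i)\neq 0$) are independent, so the table is a complete case analysis.
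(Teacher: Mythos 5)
Your proposal is correct and follows essentially the same route as the paper: both reduce everything to evaluating $v(N\setminus i)=\max\{E-c_i,0\}$, which gives $\max_i=c_i$ when $c_i<E$ and $\max_i=E$ when $c_i\geq E$, and then substitute into the definition of $PS_i$ according to whether $v(i)$ vanishes. Your factorization into two independent binary splits is a slightly cleaner organization of the same four-case computation the paper carries out.
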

\begin{proof}
Let us treat each possible cases of Table~\ref{casi}:
\begin{itemize}
	\item \textit{Case \textsc{Gm}: $v(i)=0$, $c_i<E$}\\		
	Using the definition of bankruptcy game, it holds:\\ $v(N)-v(N\setminus i)= E-max\{0, E-c_i\}=E-E+c_i$. It follows $PS_i= x_i / c_i$.
	\item \noindent\textit{Case \textsc{Gg}: $v(i)=0$, $c_i\geq  E$}\\
	Using the definition of bankruptcy game, it holds:\\ $v(N)-v(N\setminus i)= E-max\{0, E-c_i\}=E$. It follows $PS_i= x_i / E$
	\item \noindent\textit{Case \textsc{Mm}: $v(i)\neq0$, $c_i<E$}\\
	As in case \textsc{Mg}, $v(N)-v(N\setminus i)= E-max\{0, E-c_i\}=c_i$. It follows $PS_i= (x_i-v(i)) / (c_i-v(i))$.
	\item \noindent\textit{Case \textsc{Mg}: $v(i)\neq0$, $c_i\geq E$}\\	
	As in case \textsc{Gg} , $v(N)-v(N\setminus i)= E-max\{0, E-c_i\}=E$. It follows $ PS_i= (x_i-v(i)) / (E-v(i))$. \qedhere
\end{itemize}
\end{proof}

\subsubsection*{Case terminology} the PS rate differentiates 4 possible cases we name \textsc{Gm}, \textsc{Gg}, \textsc{Mm}, \textsc{Mg}. If a player asks less than $ E $ we call it \textit{moderate player (\textsc{m})} while if it asks more than $ E $ it is a  \textit{greedy player (\textsc{g})}. In similar way, if the sum of the demand of a group of $ n-1 $ players exceeds $ E $, that means $ v(i)=0 $, the group is a \textit{group of greedy players (G)}
otherwise if $ v(i)\neq 0 $ we have a \textit{group of moderate players (M)}.

Proposition \ref{relaz} highlights that, not only there is a relation between the DFS rate and the PS rate, the satisfaction of a user should be modified when it is considered as a player inside a cooperative game. In particular, we can notice that for case \textsc{Gm} 
the PS rate coincides with the DFS one, i.e., $PS_i=DFS_i$; 
for case \textsc{Gg}, the user satisfaction measured with the PS rate is higher than when it is measured with the DFS rate, i.e., $PS_i \geq DFS_i$; 
in the \textsc{Mg} case, we have instead that  $DFS_i \geq PS_i$.
We can also notice that the denominator of the PS rate is always different from zero. In cases \textsc{Gm} and \textsc{Gg} this is obviously true, in case \textsc{Mm} the denominator is zero when $ \sum_{i=1}^n c_i=E $ but in this case we set $ PS_i=1 $ and in case \textsc{Mg} the denominator is zero when $ \sum_{j\in N,j\neq i} c_j=0 $ that is impossible. 
Furthermore, from Proposition \ref{relaz} it follows that if an allocation, i.e. a solution of an allocation problem that satisfies efficiency, non-negativity and demand boundedness, is an imputation, then $ PS_i \in [0,1] $ for all the users. This holds due to the fact that for an allocation, in each of the 4 cases presented above, it is always verified that  $v(N)-v(N\setminus i)$ is an upper bound for $x_i$. 


\subsection{Game-theoretical interpretation}
To support and justify the use of the new satisfaction rate, we show an interesting game-theoretic interpretation. 

Gately~\cite{Gately} introduced the concept of propensity to disrupt in order to eliminate the less fair imputation inside of the core. The idea was to investigate the gain of the player from the cooperation or, instead, its propensity to leave the cooperation, and to eliminate the imputation for which the propensity to leave the coalition for some players is excessively high. The formal definition of the propensity to disrupt is given in~\cite{little}.
\begin{definition}	[Propensity to disrupt]
	For any allocation vector $x$, the {\it{propensity to disrupt}} $d(x,S)$ of a coalition $S \in N$ $(S\neq \emptyset, N)$ is the ratio of the loss incurred by the complementary coalition $N\setminus S$ to the loss incurred by the coalition $S$ itself if the payoff vector is abandoned. In formula,
	\begin{equation}\label{d}
	{d(x,S)=\frac{x(N\setminus S)-v(N\setminus S)}{x(S)-v(S)}.}
	\end{equation}
\end{definition}
An equivalent definition of $d(x,S)$ is :	
	\begin{equation}\label{d2}
	d(x,S)=\frac{\widetilde{x}(S)-v(S)}{x(S)-v(S)}-1
	\end{equation}
	where:	$\widetilde{x}(S)= v(N)-v(N\setminus S)$ \cite{Gately}.

The propensity to disrupt of a coalition $S$ quantifies its desire to leave the coalition. When $x(S)=v(S)$ the propensity to disrupt of $S$ is infinite and the desire of $S$ to leave the coalition is maximum; when $x(S)>v(S)$ but $x(S)-v(S)$ is small, the value of $d(x,S) $ is very high and again $S$ does not like the agreement; when $x(S)=v(N)-v(N\setminus S)$ the propensity to disrupt is zero and $S$ has the propensity not to destroy the coalition; when $x(S)>v(N)-v(N\setminus S)$ the index is negative and there is an hyper-enthusiasm for such an agreement.

It holds an interesting relationship between the propensity to disrupt and the player satisfaction rate.
\begin{proposition}
	The relationship between the player satisfaction rate and the propensity to disrupt is:
	\begin{equation}
	\label{Is}{PS_i=\dfrac{1}{d(x,i)+1}.}	
	\end{equation}
	
\end{proposition}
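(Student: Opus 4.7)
The plan is to simply instantiate the equivalent form \eqref{d2} of the propensity to disrupt at the singleton coalition $S=\{i\}$ and recognize the resulting expression as the reciprocal of $PS_i+\text{something}$ that rearranges into the claim. Since this is essentially an algebraic identity, no deep game-theoretic machinery is required; the main content is matching up notation between $(\min_i,\max_i)$ and $(v(S),\widetilde{x}(S))$.

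First I would specialize the quantities in \eqref{d2} to $S=\{i\}$. For a singleton coalition we have $x(S)=x_i$, $v(S)=v(\{i\})=\min_i$, and $\widetilde{x}(S)=v(N)-v(N\setminus\{i\})=\max_i$. Plugging these in gives
\begin{equation*}
d(x,i)=\frac{\max_i-\min_i}{x_i-\min_i}-1.
\end{equation*}

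Next I would simply add $1$ to both sides and invert. Adding $1$ gives $d(x,i)+1=\dfrac{\max_i-\min_i}{x_i-\min_i}$, and taking reciprocals yields
\begin{equation*}
\frac{1}{d(x,i)+1}=\frac{x_i-\min_i}{\max_i-\min_i}=PS_i,
\end{equation*}
which is exactly \eqref{Is}.

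The only subtlety, and the one place I would be careful, is the degenerate case $\sum_{j}c_j=E$: there the denominator $x_i-\min_i$ vanishes (since the game is essentially trivial and $x_i=c_i=\max_i=\min_i$ is forced), so $d(x,i)$ is undefined while $PS_i$ has been set to $1$ by convention. I would mention that the identity is to be read under the standing assumption $\sum_j c_j>E$ that was used to define $d(x,i)$ in \eqref{d}; outside of that degenerate case the two sides coincide as just shown. No further obstacle is expected, since everything else is a one-line substitution.
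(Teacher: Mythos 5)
Your proposal is correct and follows essentially the same route as the paper: both start from the equivalent form \eqref{d2} of the propensity to disrupt, specialize it to the singleton $S=\{i\}$, identify the resulting fraction $\frac{v(N)-v(N\setminus i)-v(i)}{x_i-v(i)}$ with $1/PS_i$, and rearrange. Your extra remark about the degenerate case $\sum_j c_j=E$ is a reasonable caveat that the paper omits, but it does not change the substance of the argument.
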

\begin{proof}
	Using the alternative definition of $d(x,i) $ we have
	\begin{equation}
	{d(x,i)=\frac{v(N)-v(N\setminus i)-v(i)}{x_i-v(i)}-1}
	\end{equation}
	but $ \frac{v(N)-v(N\setminus i)-v(i)}{x_i-v(i)} $ is equal to $ \frac{1}{PS_i} $ so $d(x,i)=\frac{1}{PS_i}-1$.
\end{proof}
It is worth noting that if $d(x,i)$ goes to infinity, then $PS_i$ goes to $0$  and if $d(x,i)=0$ then $PS_i = 1$. This gives another interpretation of the PS rate. The higher the satisfaction  is, the bigger the enthusiasm of $i$, for being in the coalition, is. On the contrary, the closer to zero the user satisfaction is, the higher the propensity of user $i$ to leave the coalition is. 


\section{The Mood Value and the Player Fairness Index}
\label{moodvalue}


In this section, we define a new resource allocation rule we call the Mood Value. The fairness idea behind this rule is the same of the one behind the Jain's index. A repartition of a resource is fair when all the users have the same satisfaction.
Furthermore,  we propose a novel fairness index as a modification of the Jain's index.
\subsection{The Mood Value}
 Using the defined PS rate, 
 we can define the mood value. 
\begin{definition}[Mood Value]
	Given an allocation problem characterized by $(c,E)$, the allocation $x$ such that $PS_i=PS_j$ $\forall i,j \in N $ is called  \textit{mood value}.
\end{definition}
Due to the relation between the propensity to disrupt and the player satisfaction, the fairest solution corresponds to the one in which every player has the same propensity to leave the coalition. Equalizing the propensity to disrupt of the users, this allocation equalizes the mood of each player. In particular, given a game, it exists a unique mood such that the satisfaction of each user is the same. The closer to zero the mood is, the more unsatisfied user $i$ is; the closer to one the mood is, the more enthusiast the user $i$ is.

\begin{proposition}
	Let $(c,E)$ characterize an allocation problem. It exists a unique mood $m$ such that $PS_i=m$ $ \forall i \in N $; it~is:
	\begin{equation}\label{m}
	{	m=\frac{E-min }{max-min}}
	\end{equation}	
	where  $ min=\sum\limits_{i=1}^n v(i)= \sum\limits_{i=1}^n min(i)$ and $ max=\sum\limits_{i=1}^n [E-v(N\setminus i)]=\sum\limits_{i=1}^n max(i)  $.
	And the mood value is given by: 
	\begin{equation}\label{xm}
	{ x_i^m=v(i)+m(max(i)-min(i)). }
	\end{equation}
\end{proposition}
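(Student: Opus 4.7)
The plan is to prove the statement by direct computation. The defining condition $PS_i = m$ for every $i$ is, by the definition of the PS rate, equivalent to $x_i = v(i) + m(\max(i) - \min(i))$ for every $i$; hence once $m$ is pinned down, the allocation $x^m$ is forced, and conversely any common value $m$ of all the $PS_i$ must satisfy the equation obtained by summing these expressions and invoking efficiency.

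More concretely, I would first rewrite the equality $PS_i=m$ using Equation~(\ref{player}) and the identification $\min_i=v(i)$, $\max_i = E - v(N\setminus i)$ from Proposition~\ref{relaz}, obtaining the pointwise relation
\begin{equation*}
x_i = v(i) + m\bigl(\max(i) - \min(i)\bigr).
\end{equation*}
Summing over $i \in N$ and using the efficiency requirement $\sum_{i\in N} x_i = E$ together with the definitions of $\min$ and $\max$ given in the statement yields the single scalar equation $E = \min + m(\max - \min)$. Solving for $m$ gives~(\ref{m}), and substituting back gives~(\ref{xm}).

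For uniqueness, the key observation is that the equation for $m$ is linear, so it admits a unique solution as soon as $\max - \min \neq 0$. I would verify this by noting that $\max - \min = \sum_{i \in N}\bigl(E - v(N\setminus i) - v(i)\bigr)$, and each summand is non-negative (it is essentially the marginal slack $v(N) - v(N\setminus i) - v(i)$ of the bankruptcy game, which is non-negative by superadditivity). Under the standing hypothesis $\sum_{i} c_i > E$ ensuring we are in the proper bankruptcy regime, at least one such summand is strictly positive, so $\max > \min$ and the expression~(\ref{m}) is well defined. The case $\sum_i c_i = E$ is excluded here because it has already been handled separately in the definition of $PS_i$.

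The main (small) obstacle is really just the well-definedness check $\max - \min > 0$: one must rule out the degenerate situation in which $\max_i = \min_i$ for every $i$ simultaneously. Everything else reduces to a one-line algebraic manipulation, so the proof is expected to be short; the novelty of the proposition lies in the interpretation of $m$ as a common mood, not in the derivation.
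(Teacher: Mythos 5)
Your proof follows essentially the same route as the paper's: rewrite $PS_i=m$ as $x_i=v(i)+m(\max(i)-\min(i))$, sum over $i\in N$, invoke efficiency to solve for $m$, and substitute back to obtain $x_i^m$. The only addition is your well-definedness check that $\max-\min>0$, which the paper omits entirely; it is a worthwhile remark, though your claim that some summand $E-v(N\setminus i)-v(i)$ is strictly positive whenever $\sum_i c_i>E$ can fail in the degenerate case where all but one claim is zero, so strictly it requires a mild non-degeneracy assumption on the demand vector.
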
 

\begin{proof}
	Let $PS_i=m $ $\forall i \in N$. It follows:
    
    $ x_i=m(E-v(N\setminus i))+(1-m)v(i).$\\
Due to the efficiency property it holds: 
		$\sum\limits_{i=1}^n m(E-v(N\setminus i))+(1-m)v(i)=E$. Thus \eqref{m}. 
	Since $x_i$ is the mood value iff $ PS_i=m$ $\forall i \in N $:
	\begin{equation}
	{\frac{x_i-v(i)}{E-v(N\setminus i)-v(i)}=m}
	\end{equation}
	$\forall i \in N$ and \eqref{xm} remains proved.
\end{proof}
\vspace{-0.2cm}

From \eqref{m} we can notice that the mood depends only on the game setting, thus, given a bankruptcy game, we can know a priori the value of the mood that produces a fair allocation. Knowing $m$, on can easily calculate  the mood value $ x_i^m $.

The formula \eqref{xm} shows that each user receives the minimum possible allocation $v(i) $ plus a portion $m$ of the quantity $max_i-min_i$. The nearer to 1 is the mood $m$, the greater is the happiness of each user, and the closer to the maximum the allocation is. In fact, when $m$ is equal to 1, the player receives exactly $E-v(N\setminus i)$, that is the maximum portion of resource that it can get, being inside a bankruptcy game.  

The mood value owns some interesting properties. It is an allocation thus it satisfies  non-negativity, demand boundedness and efficiency property; it is stable, that means it belongs to the core of the game (prop.~\ref{propcore}) and it guarantees more than minimal right to each player ($ x_i^m>v(i)) $.  Furthermore it satisfies the following property: if $ v(i)=v(j) $ and $ v(N\setminus i)=v(N\setminus j) $ then $ x_i^m=x_j^m $. This implies the equal treatments of equals ($ c_i=c_j $ $ \Rightarrow $ $ x_i^m=x_j^m $) and equal treatment of greedy claimants
(given a bankruptcy game, let G be the set of greedy players, i.e. such that $ c_i>E $: if $ |G|\geq 2 $ then $ x_i^m=x_j^m $ $ \forall i,j \in G $). This last property guarantees that even if a user has a cheating behavior its demand is bounded by the available amount of resource $E$. Furthermore the mood value is a strategy-proof allocation because a user has no advantages in splitting his demand.  
\begin{proposition}\label{propcore}
The mood value belongs to the core of $(N,v)$.
\end{proposition}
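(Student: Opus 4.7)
The plan is to verify the defining inequalities $\sum_{i\in S}x_i^m\geq v(S)$ for every proper coalition $S\subset N$, together with efficiency $\sum_{i\in N}x_i^m=v(N)$. A first simplification exploits the bankruptcy structure: since $v(S)=\max\{0,E-\sum_{j\notin S}c_j\}$, any core inequality for a coalition $S$ with $v(S)>0$ rewrites, via efficiency, as $\sum_{j\notin S}x_j\leq\sum_{j\notin S}c_j$, while the inequality for $v(S)=0$ reduces to $\sum_{i\in S}x_i\geq 0$. Hence, for bankruptcy games, belonging to $C(v)$ is equivalent to being an allocation in the sense of Section~\ref{background}, namely efficient, non-negative, and demand-bounded; it therefore suffices to check these three properties for $x^m$.

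Next I would verify them one by one, using the representation $x_i^m=v(i)+m(\max_i-\min_i)$ and the identity $\max_i=v(N)-v(N\setminus i)=E-\max\{0,E-c_i\}=\min\{c_i,E\}$. Efficiency is immediate from the definition of $m$: summing $x_i^m=\min_i+m(\max_i-\min_i)$ and substituting $m=(E-\min)/(\max-\min)$ gives $\sum_i x_i^m=E=v(N)$. Non-negativity follows from $v(i)\geq 0$ together with $m\geq 0$ and the componentwise inequality $\max_i\geq\min_i$, which is nothing but supermodularity applied to $\{i\}$ and $N\setminus\{i\}$. Demand boundedness follows from $x_i^m\leq\max_i=\min\{c_i,E\}\leq c_i$, provided that $m\leq 1$.

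The decisive step is therefore to show that $m\in[0,1]$. Non-negativity of the numerator $E-\min$ follows from superadditivity, $\sum_i v(i)\leq v(N)=E$, and positivity of the denominator follows from the componentwise inequalities $\max_i\geq\min_i$, strict on any non-degenerate instance. For $m\leq 1$ one needs $\max\geq E$, which I would establish by a short case split: either some $c_i\geq E$, in which case $\min\{c_i,E\}=E$ already forces $\max\geq E$; or every $c_i<E$, in which case $\max=\sum_i c_i>E$ by the bankruptcy assumption. The main obstacle I anticipate is simply keeping this case analysis tidy, since superadditivity (for $\min\leq E$) and the bankruptcy hypothesis (for $\max\geq E$) must be combined to ensure that $m$ is a bona fide convex combination coefficient; once this is done, the containment $x_i^m\in[\min_i,\max_i]$ delivers the three allocation properties, and the reformulation in the first step yields $x^m\in C(v)$.
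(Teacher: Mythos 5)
Your proof is correct and, in its core-membership step, follows the same reduction as the paper: for a bankruptcy game the inequality $\sum_{i\in S}x_i\geq v(S)$ is handled by splitting on whether $v(S)=0$ (where it follows from non-negativity of the allocation) or $v(S)>0$ (where, via efficiency, it is equivalent to demand boundedness on $N\setminus S$); the paper phrases the second case as a proof by contradiction, but the content is identical. Where you genuinely go further is in actually verifying that $x^m$ \emph{is} an allocation. The paper asserts non-negativity and demand boundedness of the mood value in the paragraph preceding the proposition and then invokes them inside the proof without justification; you instead derive both from the containment $x_i^m\in[\min_i,\max_i]$, which in turn rests on showing $m\in[0,1]$ --- non-negativity of $E-\min$ from superadditivity, positivity of the denominator from $\max_i\geq\min_i$ (supermodularity at $\{i\}$ and $N\setminus\{i\}$), and $\max\geq E$ from the case split on whether some $c_i\geq E$. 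That last step appears nowhere in the paper, so your version is the more self-contained of the two. The only caveat is the degenerate instances where $\max=\min$, so that $m$ is undefined; you flag this explicitly, and the paper silently excludes it as well.
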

\begin{proof} 
We should prove that $x_S^m\geq v(S)$, $\forall S \subseteq N$.\\
If $ v(S)=0 $ the condition holds due to the fact that $ x_i^m< 0 $, $ \forall i \in N $. 
Now consider the case $ v(S)>0$. Suppose that $x_S^m < v(S)=E-\sum_{i \in N \setminus S} c_i$. For the efficiency property it holds  $ E=x_S^m+x_{N\setminus S}^m $,
implying  $ x_{N\setminus S}^m>\sum_{i \in N \setminus S} c_i$, which yields a contradiction with the fact that, according to the mood value solution, each user receives at most its demand.
\end{proof}

 \subsubsection*{Mood Value Computation Complexity} 
 Differently from the other allocation solutions inspired by game theory, in order to calculate this new allocation, only the value of $ 2n $ coalitions, i.e., the ones formed by the single players and the ones containing $ n-1 $ players, is needed. 
The time complexity of mood value computation is dominated by the complexity of computing $v(i)$ that is $\mathcal O(n)$. In dynamic situations, i.e. when the value of each of the $n$ coalitions has to be updated at each slot of time, the complexity is therefore $\mathcal O(n^2)$, but it can be reduced to $\mathcal O(n)$ where $v(i)$ pre-computation is possible. This makes the mood value the best allocation rule in terms of time complexity together with the proportional allocation: the Shapley value has a time complexity of $\mathcal O(n!)$, while iterative algorithms for the computation of MMF and CEL allocations have a $\mathcal O(n^2\log n)$ time complexity; the Nucleolus computation is a NP-hard problem.

In terms of spatial complexity, the mood value, proportional, MMF and CEL allocations can be considered as equivalent and in the order of $\mathcal O(n)$. Instead, the Shapley value and the Nucleolus computations have a spatial complexity of $\mathcal O(2^n)$. 

\subsection{The Player Fairness Index}
Considering the observed good properties that make the Jain's index a strong fairness index, we propose its modification replacing the DFS rate of the Jain's index with the PS rate.  The resulting new fairness index we propose takes value 1 when all the users have the same satisfaction, i.e., when the allocation is the mood value.
\begin{definition}[Players fairness index]
	Given a problem $(c,E) $ and an allocation $x$, the \textit{players fairness index} is: 
	\begin{equation}
	\label{jainplayer}{J_p=\biggl[\sum\limits_{i=1}^n \bigl(PS_i\bigr) \biggr]^2 \bigg/ n\sum\limits_{i=1}^n \bigl(PS_i\bigr)^2 }
	\end{equation}
	
\end{definition}
\begin{proposition}
	The players fairness index takes value in $[\frac{1}{n}, 1]$ when the allocation belongs to the core.
\end{proposition}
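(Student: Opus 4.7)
The plan is to follow the classical Jain-index argument, using Cauchy–Schwarz for the upper bound and non-negativity of the $PS_i$'s (guaranteed by Proposition~\ref{prop1}) for the lower bound.

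For the upper bound $J_p \leq 1$, I would apply the Cauchy–Schwarz inequality to the vectors $(PS_1, \dots, PS_n)$ and $(1, \dots, 1)$, obtaining
\begin{equation*}
\Bigl[\sum_{i=1}^n PS_i \Bigr]^2 = \Bigl[\sum_{i=1}^n PS_i \cdot 1 \Bigr]^2 \leq \Bigl[\sum_{i=1}^n (PS_i)^2\Bigr] \Bigl[\sum_{i=1}^n 1 \Bigr] = n \sum_{i=1}^n (PS_i)^2,
\end{equation*}
which divided by $n \sum_i (PS_i)^2$ gives $J_p \leq 1$, with equality iff all $PS_i$ are equal (i.e., the allocation is the mood value).

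For the lower bound $J_p \geq \tfrac{1}{n}$, the key step is to invoke Proposition~\ref{prop1}: since $x$ belongs to the core, $PS_i \in [0,1]$ for every $i \in N$, and in particular $PS_i \geq 0$. Expanding the numerator, I would write
\begin{equation*}
\Bigl[\sum_{i=1}^n PS_i\Bigr]^2 = \sum_{i=1}^n (PS_i)^2 + 2\sum_{1 \leq i < j \leq n} PS_i \, PS_j,
\end{equation*}
and since every cross term $PS_i \, PS_j$ is non-negative, the sum is at least $\sum_i (PS_i)^2$. Dividing by $n \sum_i (PS_i)^2$ yields $J_p \geq \tfrac{1}{n}$, with equality iff only one $PS_i$ is non-zero.

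The only subtlety is the degenerate case in which $\sum_i (PS_i)^2 = 0$, i.e., $PS_i = 0$ for all $i$, where $J_p$ is formally undefined; I would either exclude this case by convention (as is done with Jain's original index) or set $J_p = 1$ by continuity, since all players share the same satisfaction $0$. Apart from this point, the argument is essentially the standard proof of the bounds for Jain's index, transported from the $DFS$ rates to the $PS$ rates via Proposition~\ref{prop1}; there is no substantive obstacle beyond ensuring non-negativity, which is precisely what the core assumption provides.
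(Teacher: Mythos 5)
Your proof is correct, and it is actually more complete than the one in the paper. The paper's own argument only \emph{evaluates} $J_p$ at the two claimed extremal configurations --- it checks that $J_p=1$ when all the $PS_i$ are equal and that $J_p=\frac{1}{n}$ when exactly one $PS_i$ is non-zero --- but it never actually proves that these configurations are extremal, i.e., it does not establish the inequalities $\frac{1}{n}\leq J_p\leq 1$ for an arbitrary core allocation. You supply exactly the missing half: Cauchy--Schwarz against the all-ones vector gives $J_p\leq 1$ with equality iff the $PS_i$ are all equal, and the expansion of the squared sum together with the non-negativity of the cross terms $PS_i\,PS_j$ (which is where Proposition~\ref{prop1}, hence the core assumption, enters) gives $J_p\geq\frac{1}{n}$. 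Your remark about the degenerate case $\sum_i (PS_i)^2=0$ is also a genuine point that the paper glosses over (its phrase ``$\sum_i PS_i$ is always not negative'' does not exclude it); handling it by convention or by continuity, as you propose, is the standard fix inherited from Jain's original index. In short: same underlying ingredients (Proposition~\ref{prop1} plus the Jain-index structure), but your version is the rigorous proof of which the paper's is only a verification at the endpoints.
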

\begin{proof}
	From Proposition \ref{prop1} follows that $ PS_i $ belongs to $ [0,1] $ and that  $ \sum\limits_{i=1}^n PS_i$ is always not negative. The maximum fairness is measured when all the users have the same PS rate, i.e.: 
	\begin{center}
		$\biggl[\sum\limits_{i=1}^n \bigl(PS_i\bigr) \biggr]^2 =\bigl( n PS_i \bigr)^2 \Rightarrow n\sum\limits_{i=1}^n \bigl(PS_i\bigr)^2=nn \bigl(PS_i\bigr)^2.$
	\end{center}
	Thus $ J_p=1 $.
	The minimum fairness is measured when $\exists ! k$ s.t. $PS_k\neq 0 $ and $ PS_j= 0 $ $\forall j \neq k$. In this case:
	
	{\small
	\hspace{-0.4cm}$\biggl[\sum\limits_{i=1}^n \bigl(PS_i\bigr) \biggr]^2 =\bigl(PS_k\bigr)^2 \Rightarrow n\sum\limits_{i=1}^n \bigl(PS_i\bigr)^2=n \bigl(PS_k\bigr)^2 \Rightarrow J_p=\frac{1}{n}$
	}
\end{proof}

\vspace{-0.2cm}
For core allocations, $J_p$ takes value in the same interval of $J$ making possible a comparison between the two indices. Furthermore, this index maintains all the good properties of the Jain's index: the population size independence, the scale and metric independence, the boundedness and the continuity.


\section{Interpretation with respect to traffic theory}
\label{traffictheory}

In the already cited seminal works about the definition of proportional and weighted proportional allocations in network communications, network optimization models are defined using as goal the maximization of an utility function. A typical application is the bandwidth sharing between elastic applications~\cite{prop}, i.e., protocols able to adapt the transmission rate upon detection of packet loss.  In this context we show how it is possible to revisit the mood value as a value resulting of the sum of the minimum allocation and the result of a weighted proportional allocation formulation where the weights are not the original demands, but new demands re-scaled accordingly to the maximum possible allocation knowing the available resource, and the minimum allocation under the awareness of other user's demands and the available resource. More precisely, the mood value can be computed as the result of the following 4-step algorithm.

\begin{flushleft}
\underline{\textit{Step 1}}:
We assign to each user the minimal right $ v(i) $.
\underline{\textit{Step 2}}: 
We set the new value of the estate $E'=E-min=$\\
$E-\sum\limits_{i=1}^{n}v(i) $ and the new demands $ c'_i = max_i-min_i$.\\
\ \\
\underline{\textit{Step 3}}:
We solve the following optimization problem 
\end{flushleft}
\begin{equation}
\begin{aligned}
& \underset{x}{\text{maximize}}
& & \sum\limits_{i=1}^{n}  c'_i\log x_i \\
& \text{subject to}
& & x_i \leq c'_i, \; i = 1, \ldots, n\\
& & &x_i  \geq 0, \; i = 1, \ldots, n\\
& & &\sum\limits_{i=1}^{n}  x_i=E'
\end{aligned}
\end{equation}
\ \\
\underline{\textit{Step 4}}:
The mood value coincides with the sum of the minimal right and the allocation given by step 3: $ x_i^m=v(i)+x_i $. 
\begin{proof}
 We should prove that the result of the optimization problem is $ x_i=mc_i'$. The lagrangian of the problem is \\
 $L(x,\mu, \lambda) = \sum\limits_{i=1}^{n}  c'_i\log x_i-\mu^T(C-Ax)-\lambda( E'-\sum\limits_{i=1}^{n}  x_i)$\\
where the vector $ \mu $ and $ \lambda $ are the lagrangian multipliers (or shadow prices), $ C $ is the vector of the demands $ [c_1', ...c_n'] $ and $ A $ is the identity matrix of dimension $ n $. Then, 
$  \frac{\partial L}{\partial y_i} =\frac{c'_i}{y_i}-\mu_i-\lambda$. The optimum is given by $ y_i=\frac{c'_i}{\mu_i+\lambda} $ when $ \mu\geq 0 $, $ Ay\leq C $, $  \sum\limits_{i=1}^{n}  y_i=E'$ and $ \mu^T(C-Ay)=0 $. This coincides with the case in which $ \mu^T=0 $ and $ \lambda\neq 0 $.
In fact, we have $ \sum\limits_{i=1}^{n}\frac{c'_i}{\lambda}=\frac{1}{\lambda}\sum\limits_{i=1}^{n}c'_i=E' .$ It follows that $ \lambda=\frac{1}{E'}\sum\limits_{i=1}^{n}c'_i $ is greater or equal to 1 and $ y_i=\frac{c'_i}{\lambda} $ is less or equal to $  c_i' $, that is an admissible solution. We can now notice that $ \lambda=\frac{1}{E'}\sum\limits_{i=1}^{n}c'_i=\frac{max-min}{E-min}=\frac{1}{m}$. It follows $ y_i=mc_i' $.
\end{proof}
\begin{example}
Let $(c,E)$ be the allocation problem of Fig. \ref{es}. Following the algorithm we have:\\
\textit{Step 1:} 
$ v(i)=[0,5,0] $. \hfill
\textit{Step 2:} 
 $E'=5$, $ c'_i = [3,5,2]$.\\
\textit{Step 3:}
$ x=[1.5, 2.5, 1] $ \hfill
\textit{Step 4:}
$ x_i^m=[1.5,7.5,1] $. \ \ \ \ \ 
\end{example}
The algorithm shows that the mood value firstly assign the minimal right (step 1) and secondly, considering the new allocation problem resulting after the first assignment (step 2), it allocates in a proportional way the resources (step 3). The proportion of resource to allocated  is the mood. 

We provides two ways to compute the mood value: (\ref{xm}) and the 4-step algorithm of Section~\ref{traffictheory}. It is clear that the computation of the mood value throught the formula (\ref{xm}) is less complex than the one using the 4 steps algorithm. 
%
%

\section{Numerical examples}
\label{results}

We tested the mood value and the new fairness index in a few significant configurations comparing them with the classical allocations and the Jain's index. 

We considered two demands distributions: (i) a uniform distribution, and (ii) a Weibull distribution.
The former can be considered as a baseline, while the latter a maybe more realistic one.
Taking inspiration from cellular (OFDMA) resource allocation studies we emulated an indoor scenario of femtocells using the WINNER II channel model~\cite{winner}:  generating in a uniform way 10000 users around the cell station between 3 and 100 m, we associate resource blocks (RBs) to each of them with a transmit power between 1 and 100 dB; 
the resulting RB distribution is well fit by a Weibull distribution.
We now consider a range for demand generation between 0 and 100 units and we generate from (i) a uniform distribution between 0 and 100 and (ii) from a Weibull distribution 
$f(x) = (\frac{a}{b})(\frac{x}{b})^{(a-1)} e^{- (\frac{x}{b})^a}$ for $x > 0$ with scale parameter $a=40$ and shape parameter $b=1.4$. 

It is worth noting that the Weibull distribution is quite close to the Pareto distribution (both are exponential ones), its and discrete variations (e.g., Zipf's one), for example used in in-network content caching resource allocation~\cite{bank} .
\begin{figure}[b]
	\vspace{-0.7cm}
	\subfloat[3 users, uniform]{\includegraphics[scale=0.14]{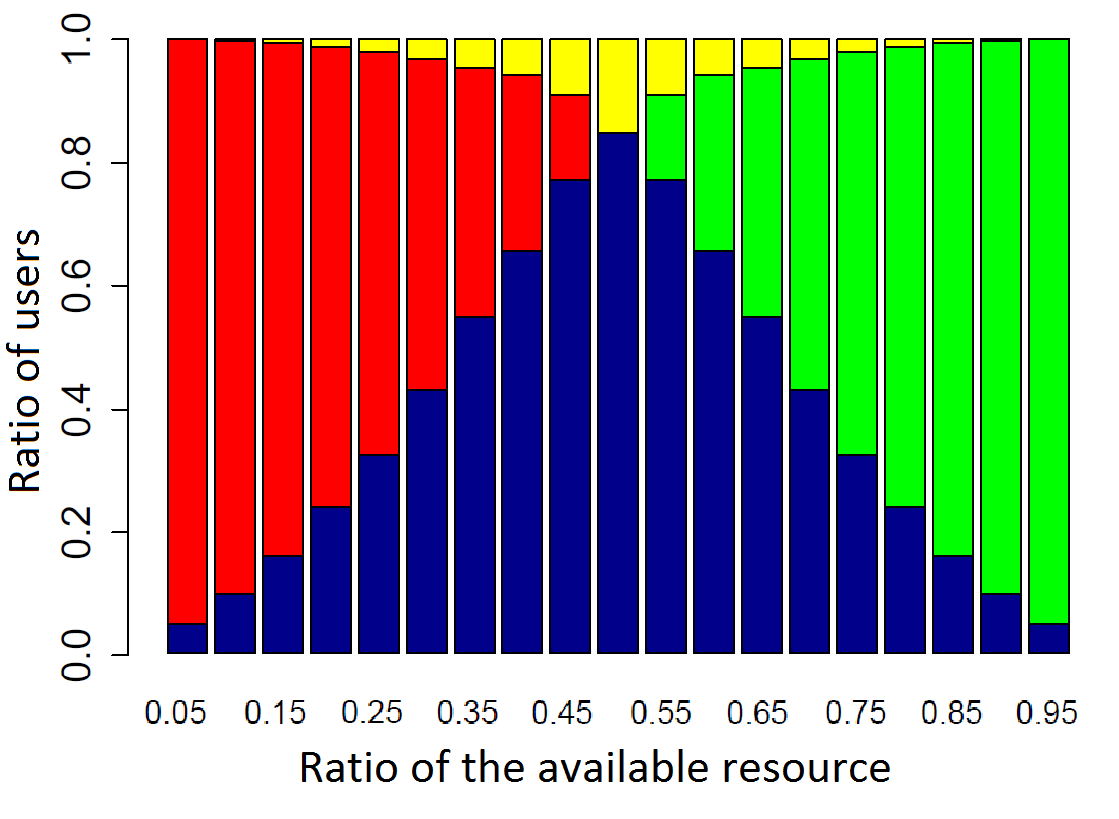}}%
        \hspace{-0.1cm}
	\subfloat[5 users, uniform]{\includegraphics[scale=0.14]{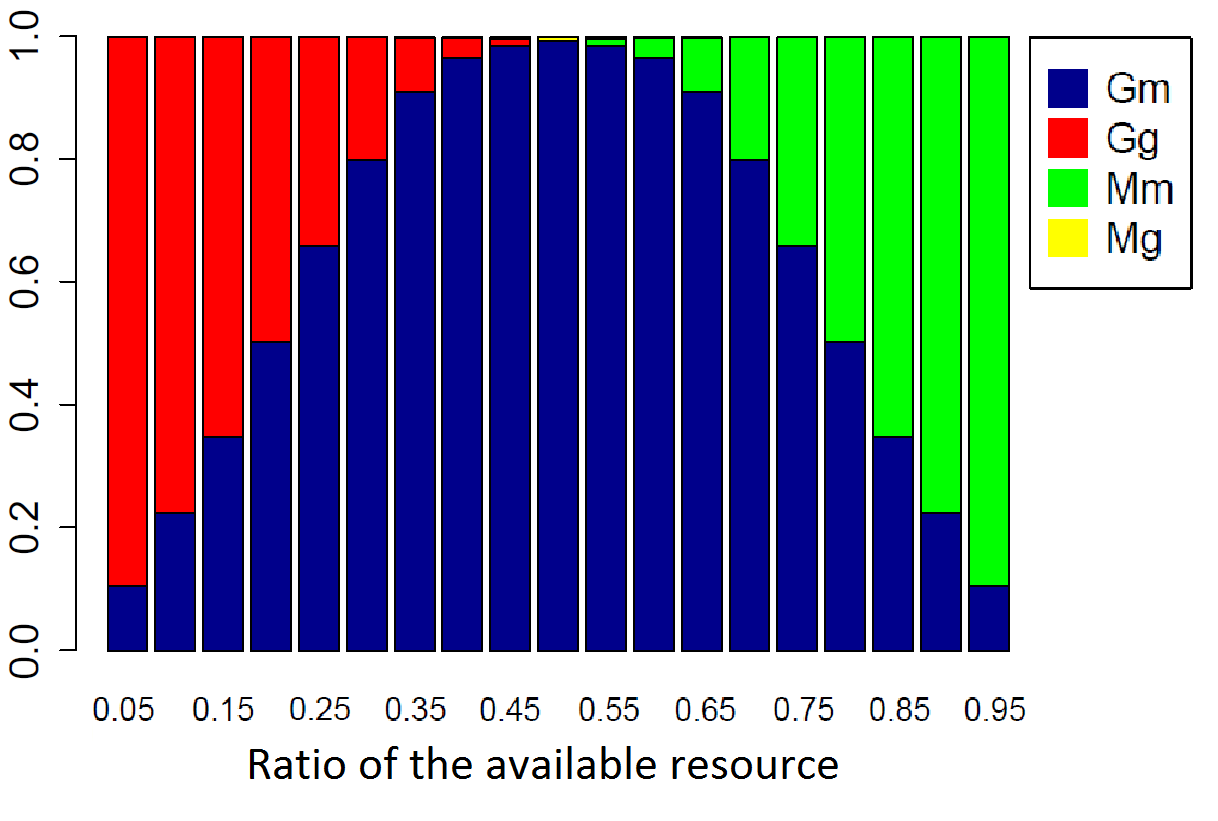}}%
     
     \vspace{-0.4cm}
	\subfloat[3 users, Weibull]{\includegraphics[scale=0.14]{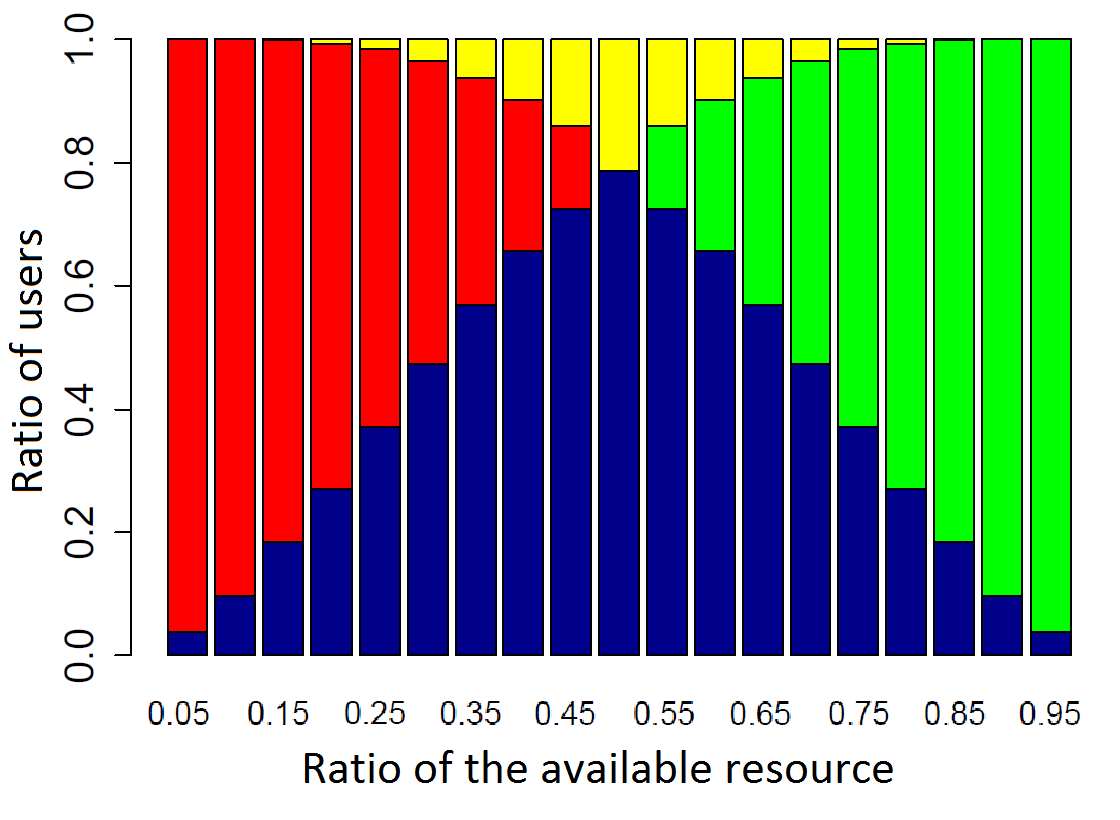}}%
	\hspace{-0.1cm}
	\subfloat[5 users, Weibull]{\includegraphics[scale=0.14]{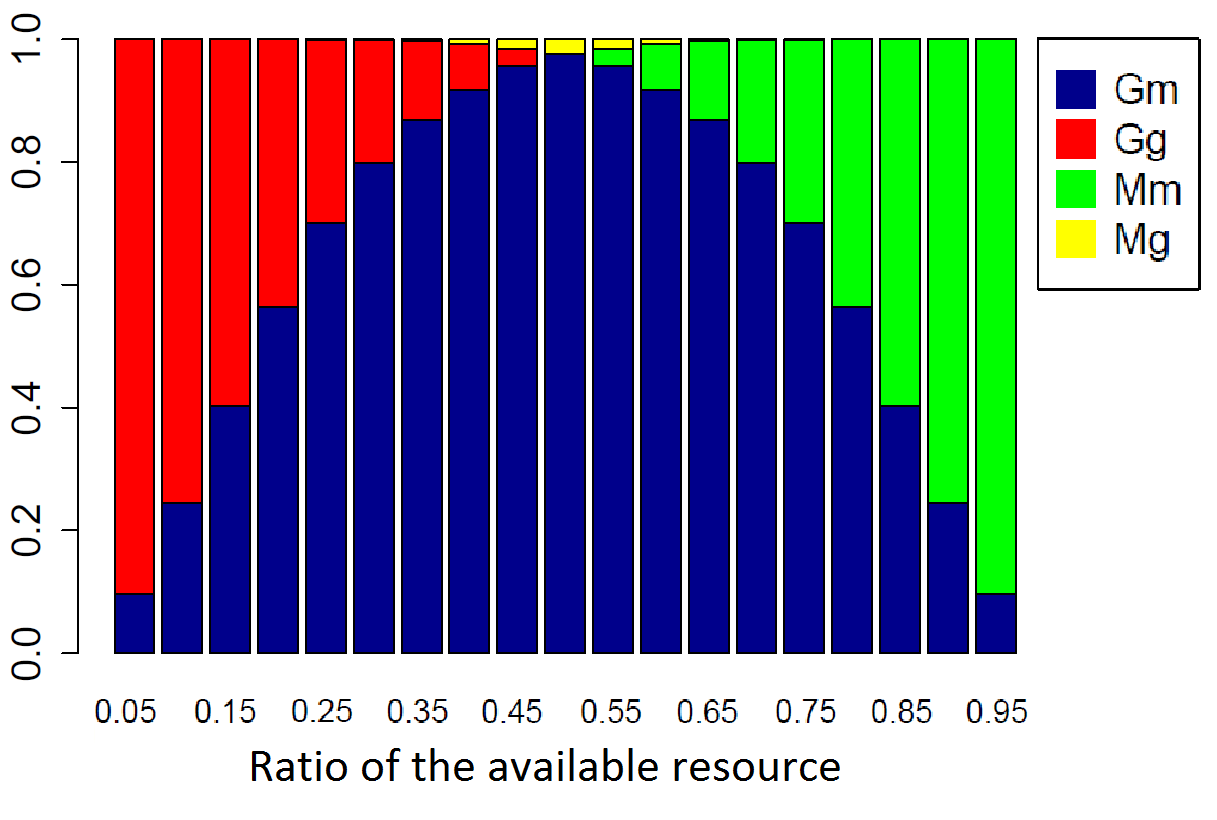}}%
	\caption{User cases distribution }
         \vspace{-0.5cm} 
	\label{scenari}
\end{figure}

\begin{figure}[t]
	\vspace{-0.6cm}
	\subfloat[Proportional]{\includegraphics[height=2.58cm]{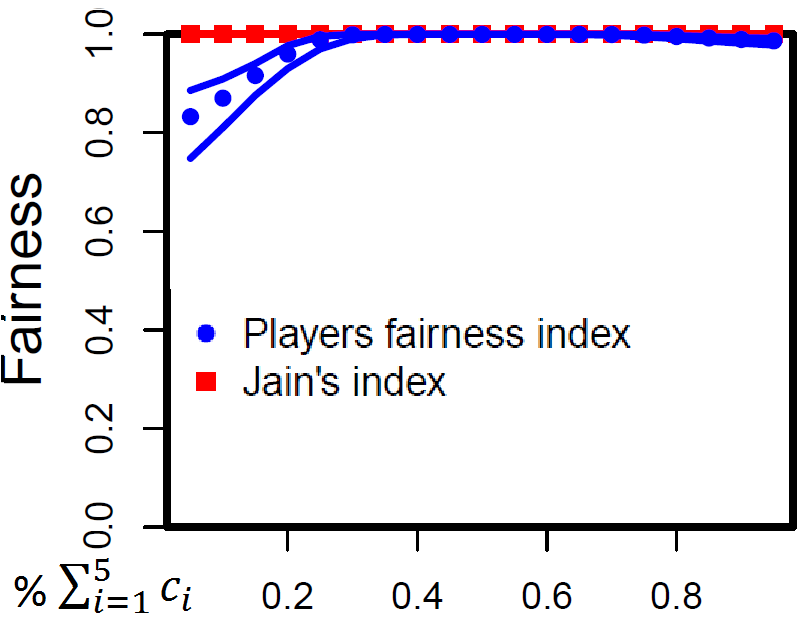}%
		\label{propr}}
	\hspace{-0.1cm}
	\subfloat[Shapley Value]{\includegraphics[height=2.58cm]{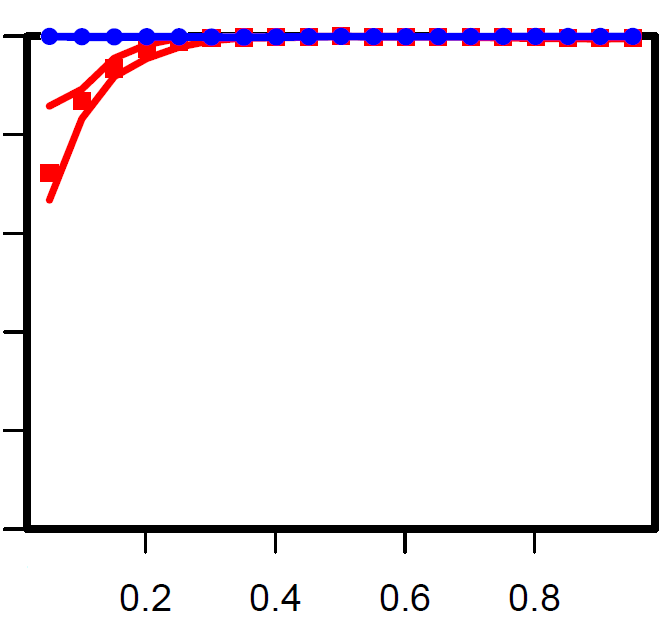}%
		\label{shapleyr}}
	\hspace{-0.1cm}
	\subfloat[Nucleolus]{\includegraphics[height=2.58cm]{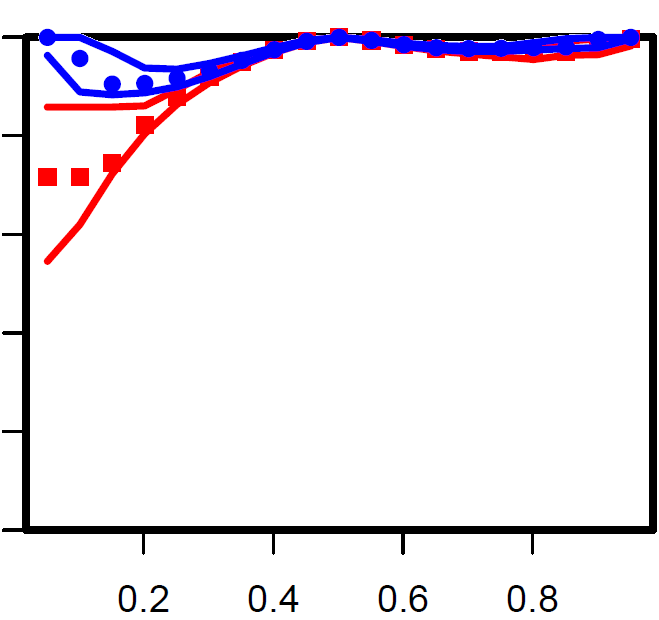}%
		\label{nuclr}}
	\vspace{-0.05cm}	
	\subfloat[Mood Value]{\includegraphics[height=2.58cm]{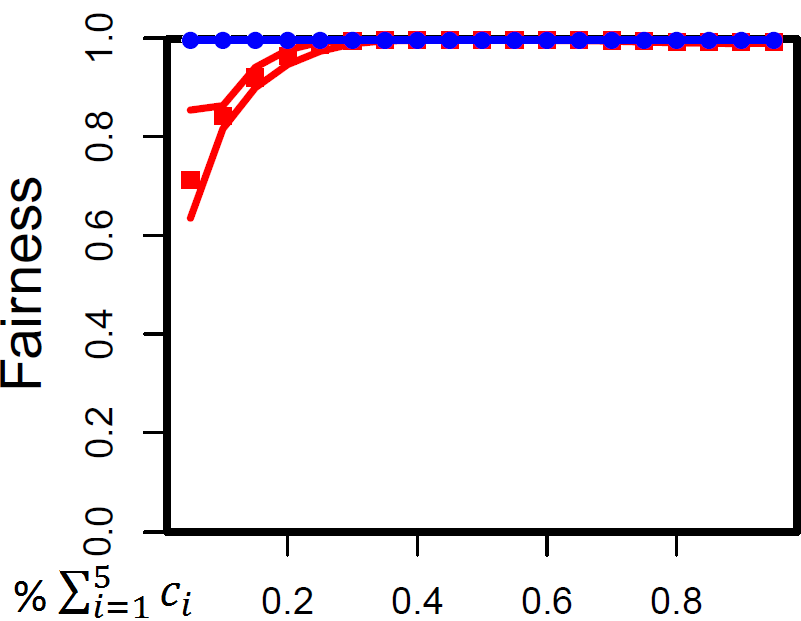}%
		\label{moodr}}
	\hspace{-0.1cm}
	\subfloat[MMF]{\includegraphics[height=2.58cm]{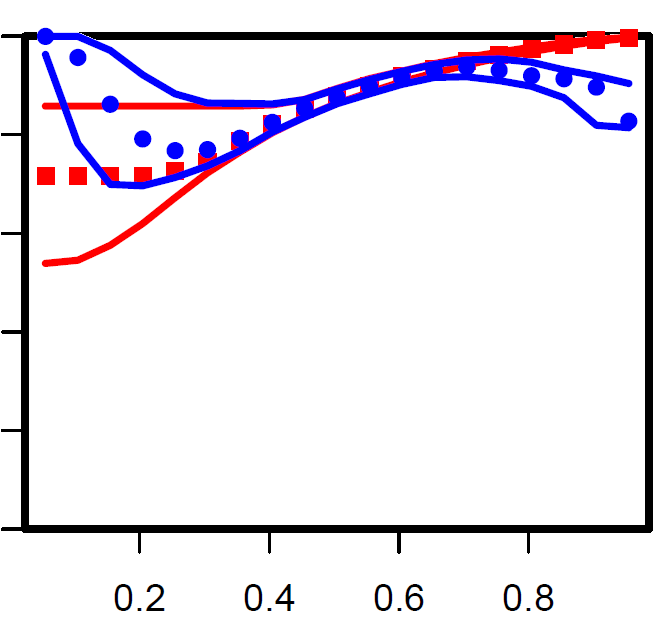}%
		\label{mmfr}}
	\hspace{-0.1cm}
	\subfloat[CEL]{\includegraphics[height=2.58cm]{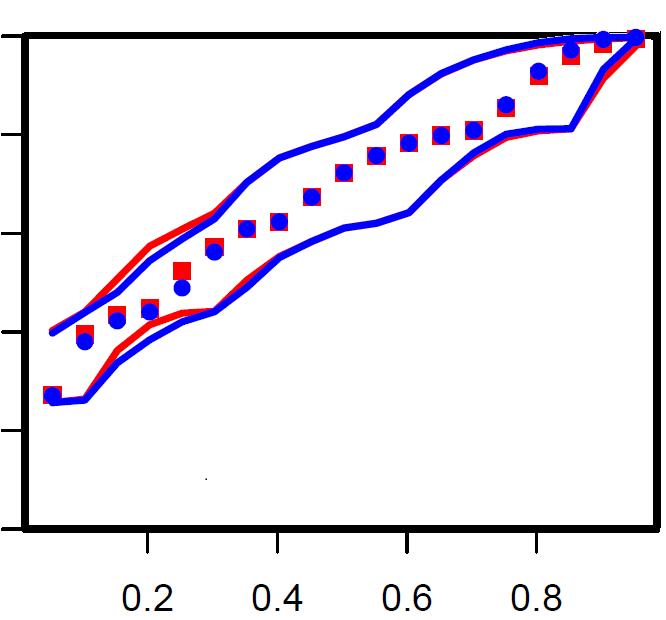}%
		\label{celr}} 	
	\caption{Fairness as a function of $E$/demand (5 users, uniform)}
        \vspace{-0.5cm}  
	\label{fig_simrand}
\end{figure}

We run different instances with a ratio of $ E  $ (available resource) ranging from $ 5\% $ to $ 95\%$ of the global demand.
We first simulate 300 bankruptcy games with 3 and 5 users.
Fig.~\ref{scenari} show the users configuration as a function of the available resource. 
With 3 users (Fig.~\ref{scenari}a,c), for low value of $ E $ almost all are greedy players (\textsc{Gg} case) due to the fact that the resource is small; increasing $E$ the number of moderate players (\textsc{Gm}) increases but also some users in configuration \textsc{Mg}  appear. In fact, increasing $ E  $ some greedy players become moderate while the others remain greedy ones; some of them are greedy inside a group of greedy users (\textsc{Gg}), while some others greedy inside a group of moderate ones (\textsc{Mg}).
When the available resource is higher than half of the global demand, greedy players \textsc{Gg} disappear and the number of moderate players increases. In particular, users \textsc{Mm} appear and they become the majority when the resource is large. 
With 5 users (Fig.~\ref{scenari}b,d), we find a similar trend than with 3 users in the number of moderate players that increases when increasing $ E $. However, \textsc{Mg} users are few; in fact, it holds that it can exist at most one \textsc{Mg} user in a game and, due to the higher number of users in the system, it is very unlikely that there exists only a player \textsc{Mg} in the system such that the sum of the demands of the other $ n-1 $ players exceeds $ E$. Thus, with a number of users higher than 5, one can practically reduce the number of user cases from 4 to 3. For this reason, in order to capture all the possible scenarios, we choose a low number of user for the first round of simulations.

Fig.~\ref{fig_simrand},~\ref{fig_sim1} and~\ref{fig_sim2} show the results of the first simulations.   We consider the six allocations discussed before: Proportional, Shapley, Nucleolus, Mood Value, MMF and CEL. We calculate the Jain's fairness index and the players fairness index and we plot, for each ratio of $E$ and each index, the mean value in between the first and third quantile lines. 
The Jain's index is depicted with the red color and square points while the players fairness index with the blue one and round points. Due to space limit, we do not show the case of 3 users with uniform demand distribution because similar to the 3-user one with the weibull distribution.

\begin{figure}[t]
	\subfloat[Proportional]{\includegraphics[height=2.58cm]{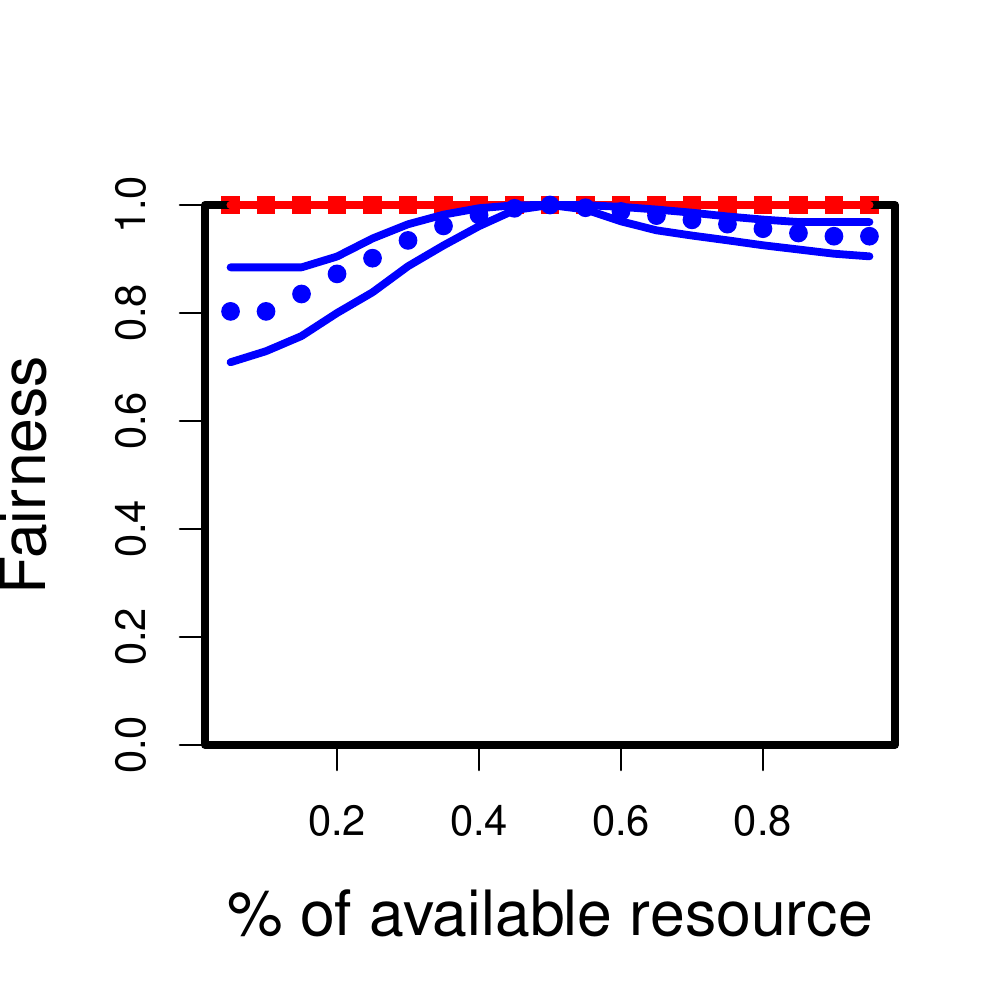}%
		\label{prop3}}
	\hspace{-0.1cm}
	\subfloat[Shapley Value]{\includegraphics[height=2.58cm]{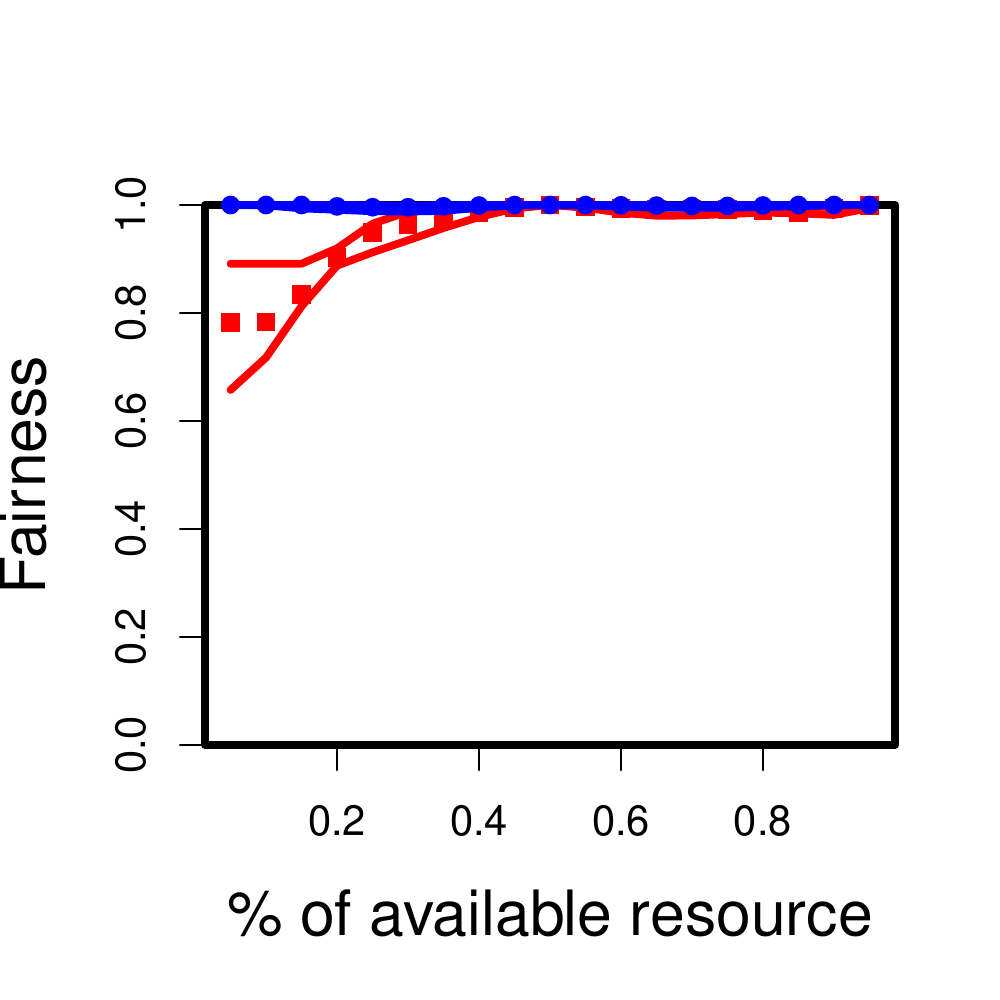}%
		\label{shapley3}}
	\hspace{-0.1cm}
	\subfloat[Nucleolus]{\includegraphics[height=2.58cm]{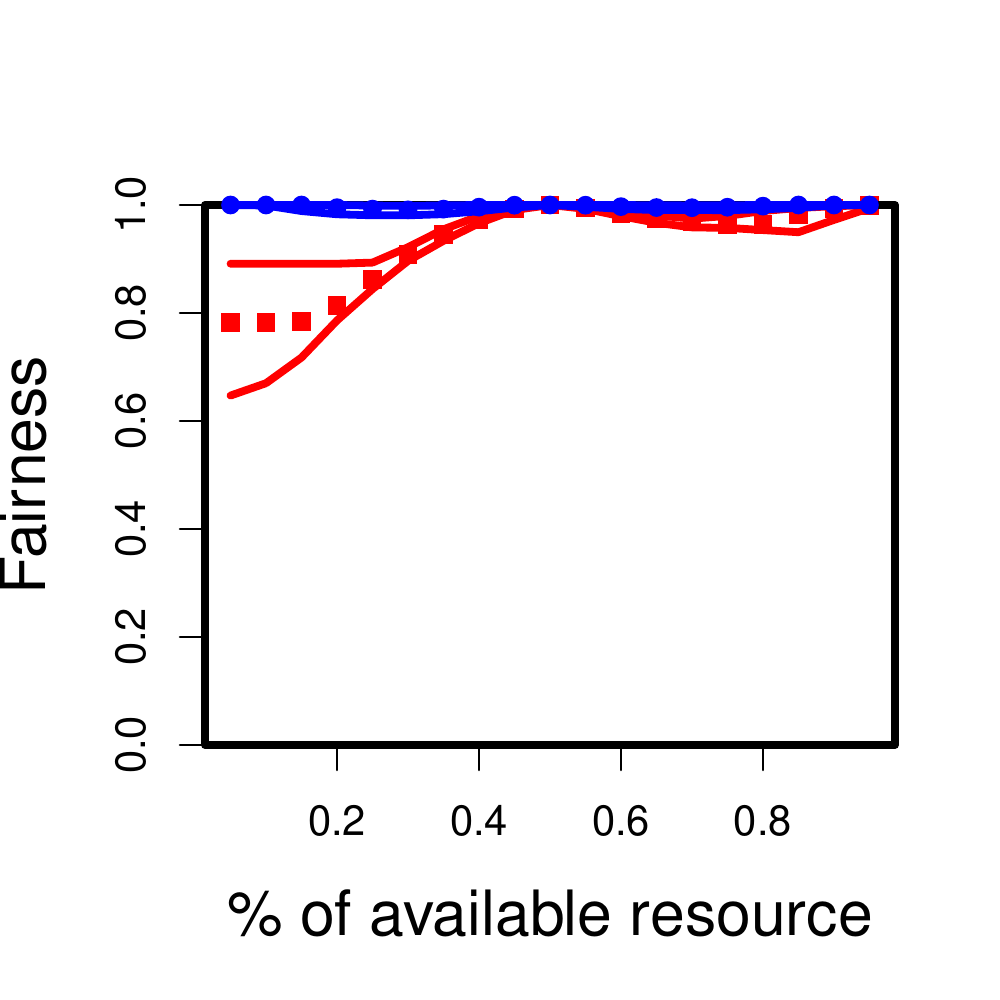}%
		\label{nucl3}}
	\vspace{-0.05cm}	
	\subfloat[Mood Value]{\includegraphics[height=2.58cm]{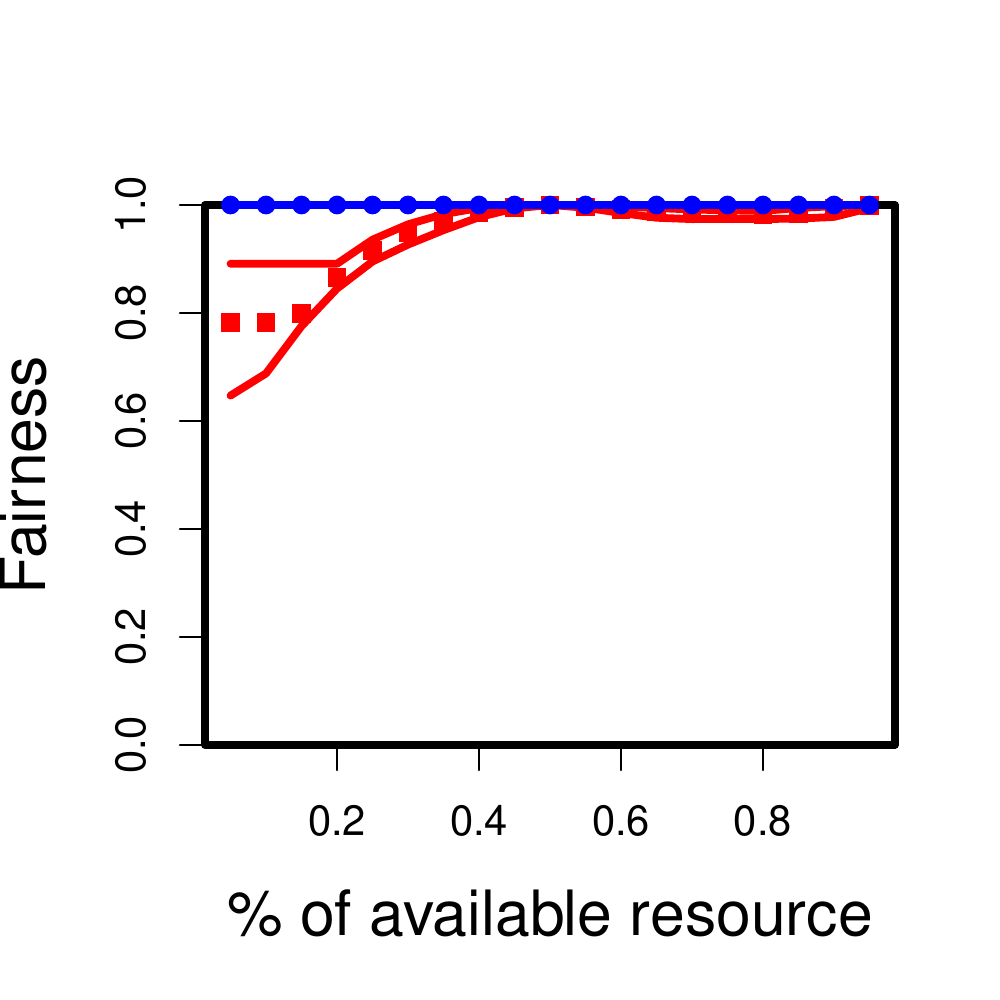}%
		\label{mood3}}
	\hspace{-0.1cm}
	\subfloat[MMF]{\includegraphics[height=2.58cm]{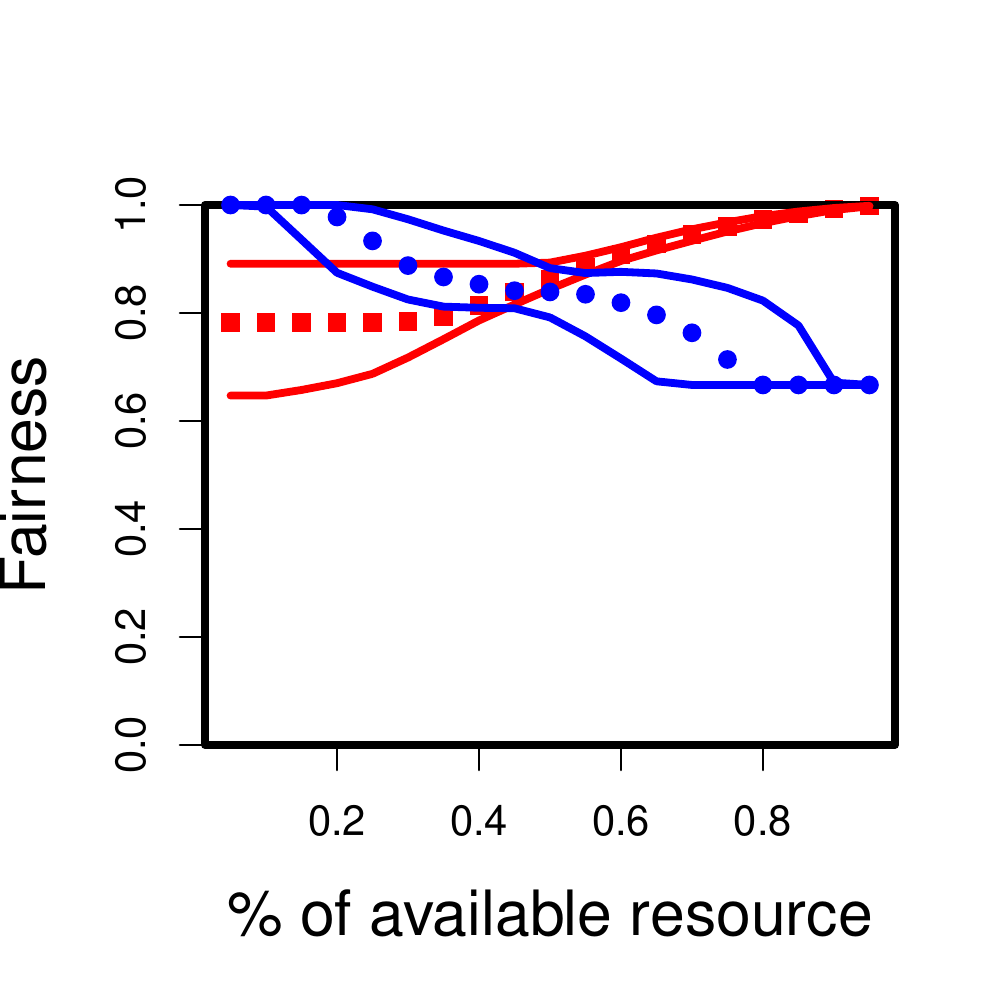}%
		\label{mmf3}}
	\hspace{-0.1cm}
	\subfloat[CEL]{\includegraphics[height=2.58cm]{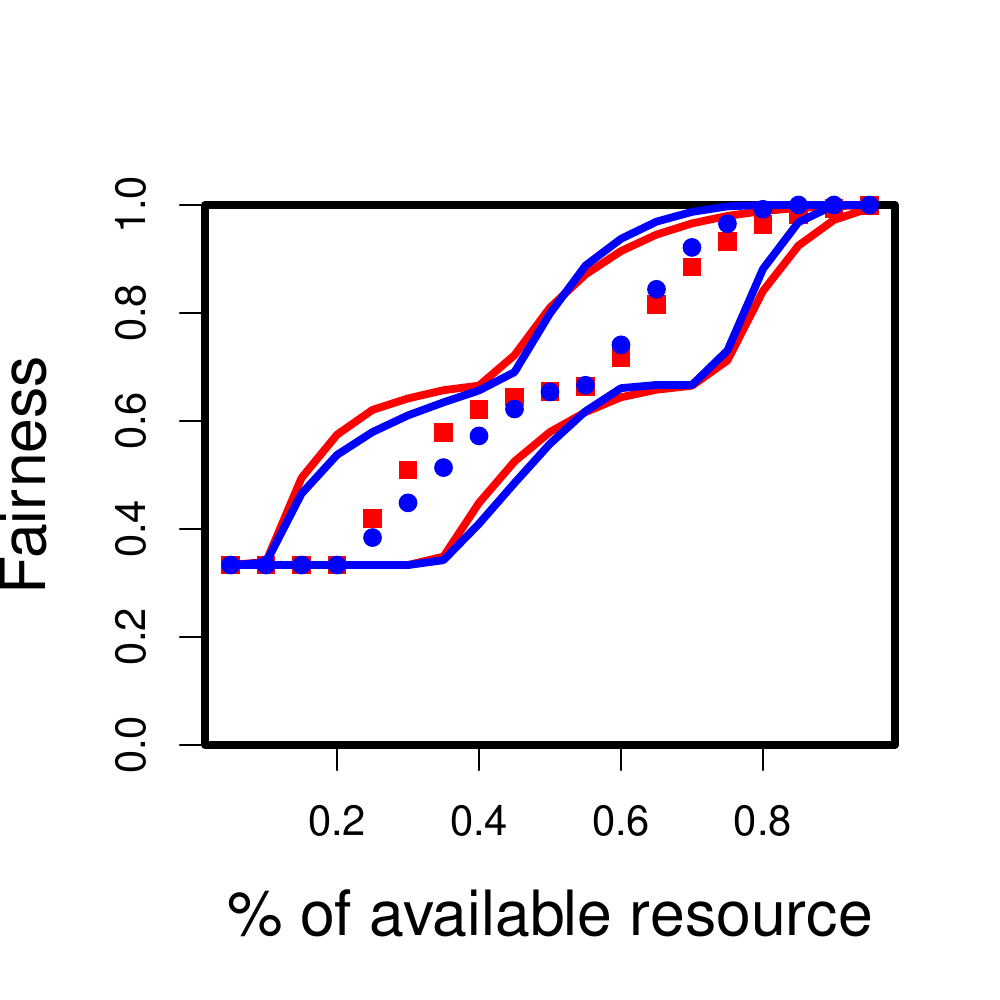}%
		\label{cel3}} 	
	\caption{Fairness as a function of $E$/demand (3 users, weibull)}
    \vspace{-0.5cm}  
	\label{fig_sim1}
\end{figure}

\begin{figure}[t]
	\vspace{-0.6cm}
	\subfloat[Proportional]{\includegraphics[height=2.58cm]{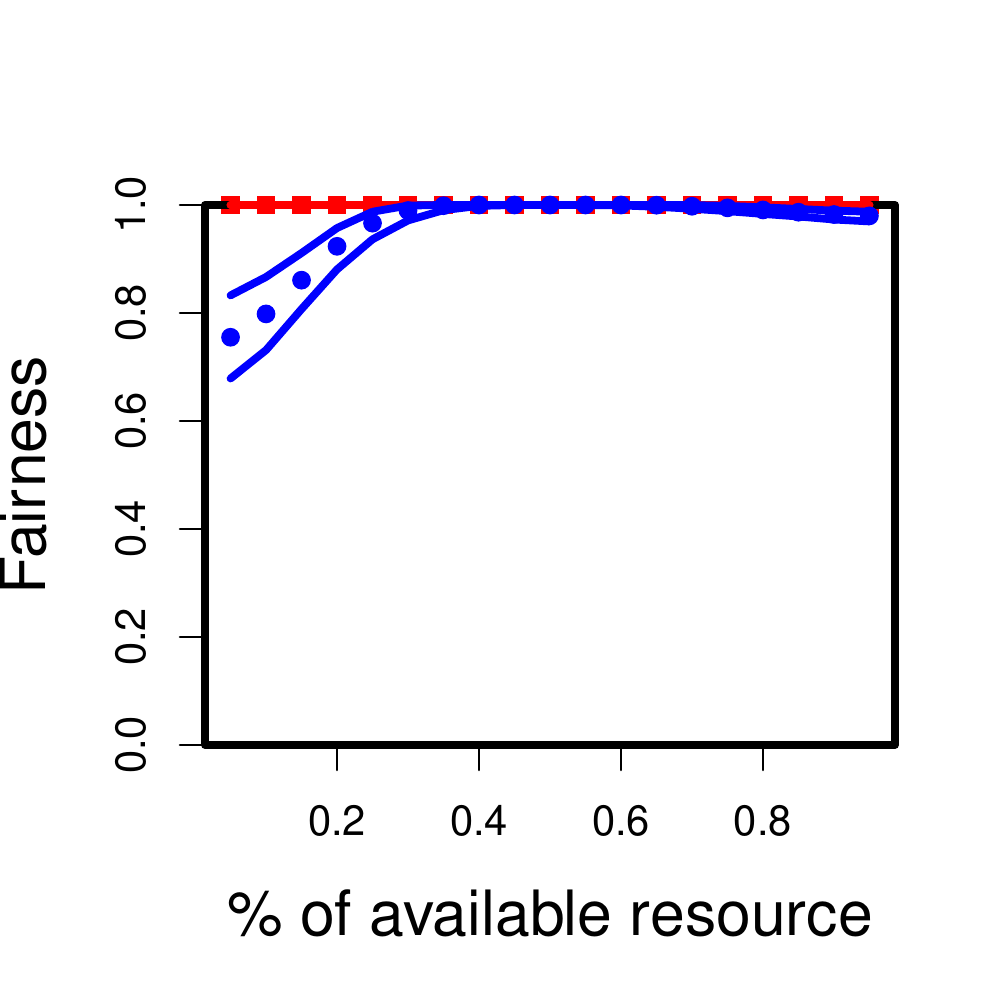}%
		\label{prop5}}
	\hspace{-0.1cm}
	\subfloat[Shapley Value]{\includegraphics[height=2.58cm]{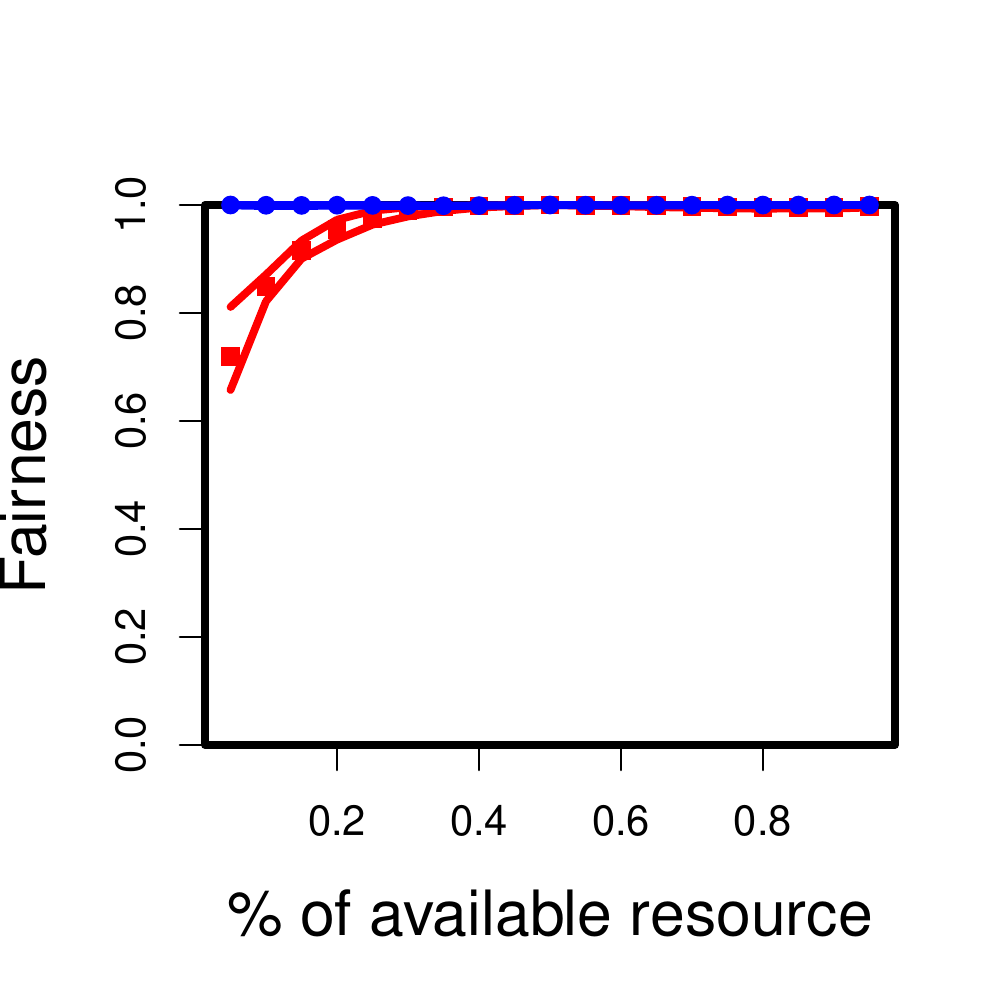}%
		\label{shapley5}}
	\hspace{-0.1cm}
	\subfloat[Nucleolus]{\includegraphics[height=2.58cm]{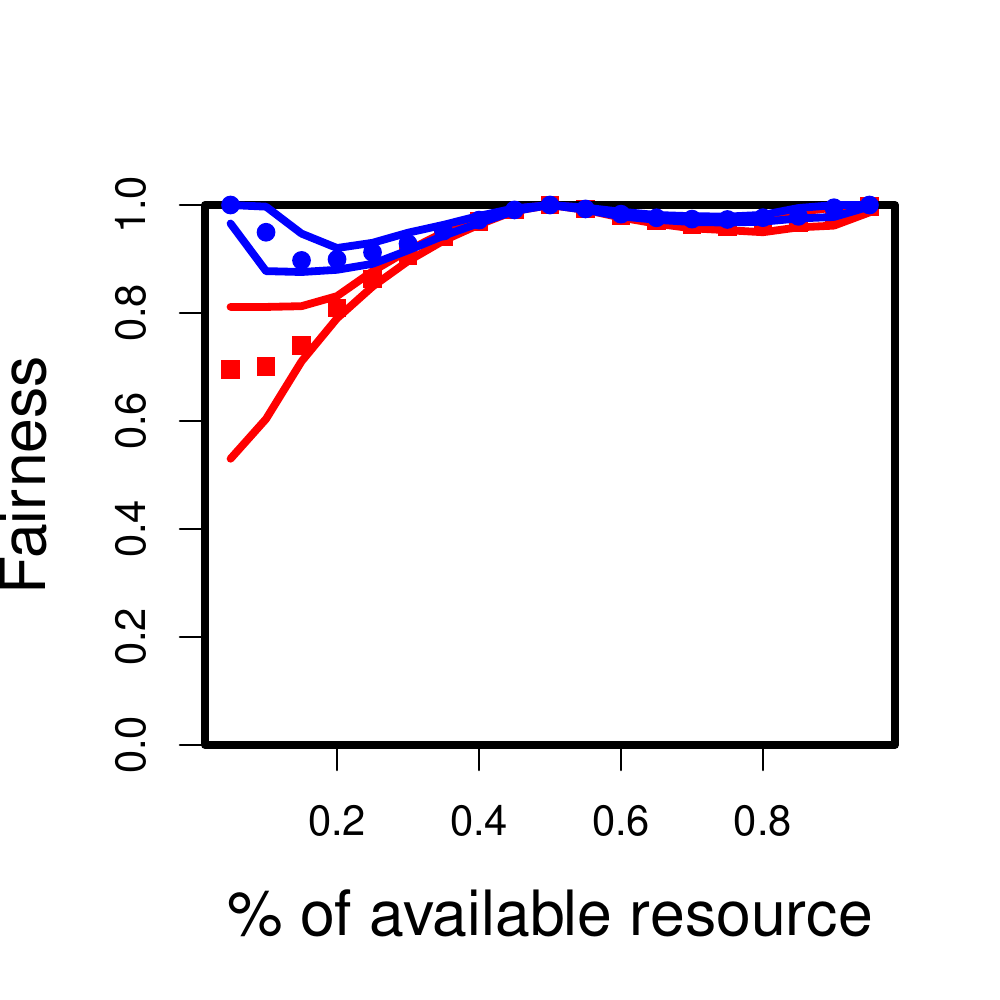}%
		\label{nucl5}}
	\vspace{-0.05cm}	
	\subfloat[Mood Value]{\includegraphics[height=2.58cm]{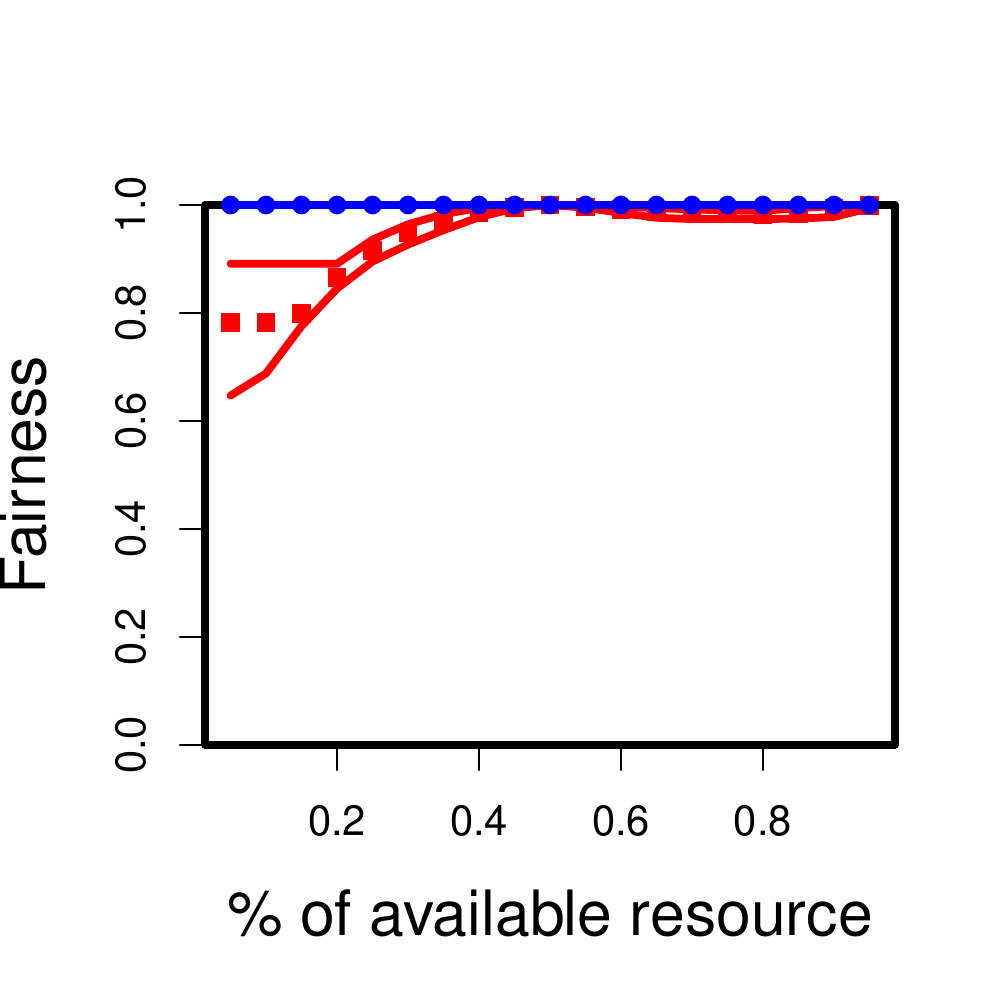}%
		\label{mood5}}
	\hspace{-0.1cm}
	\subfloat[MMF]{\includegraphics[height=2.58cm]{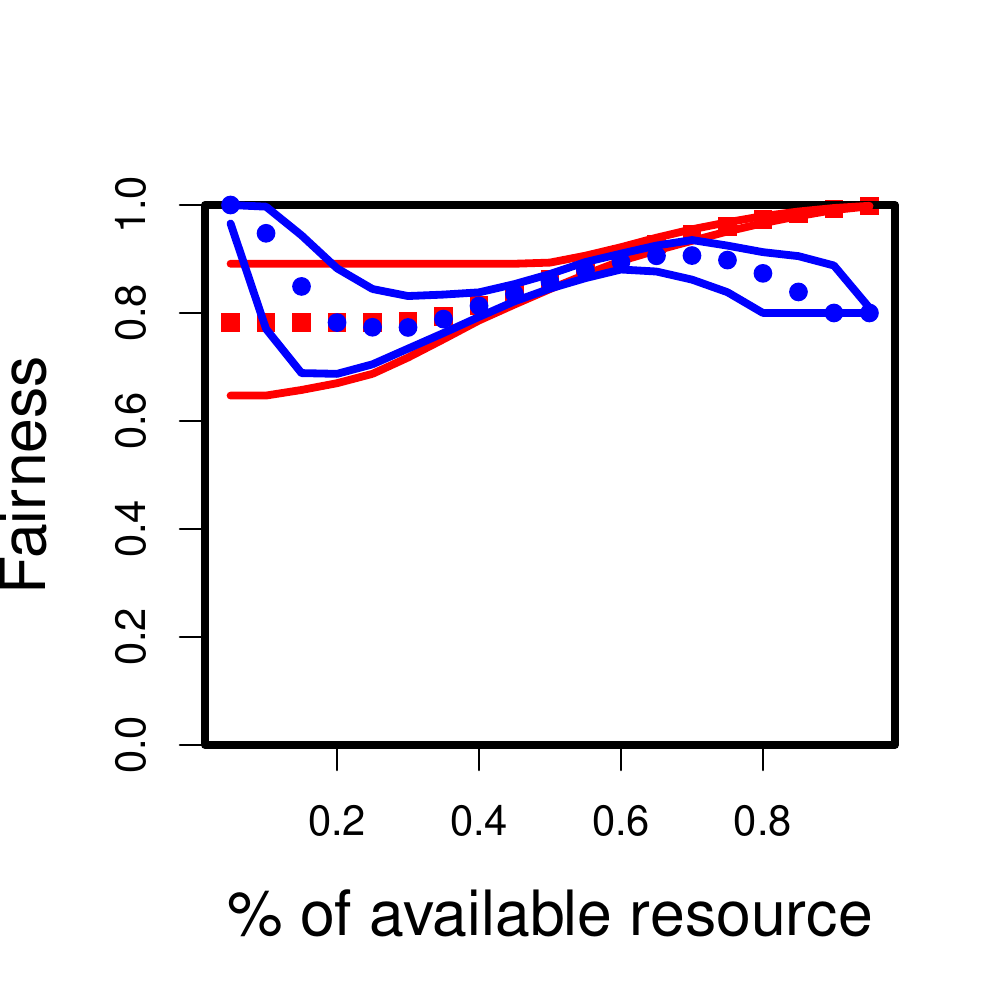}%
		\label{mmf5}}
	\hspace{-0.1cm}
	\subfloat[CEL]{\includegraphics[height=2.58cm]{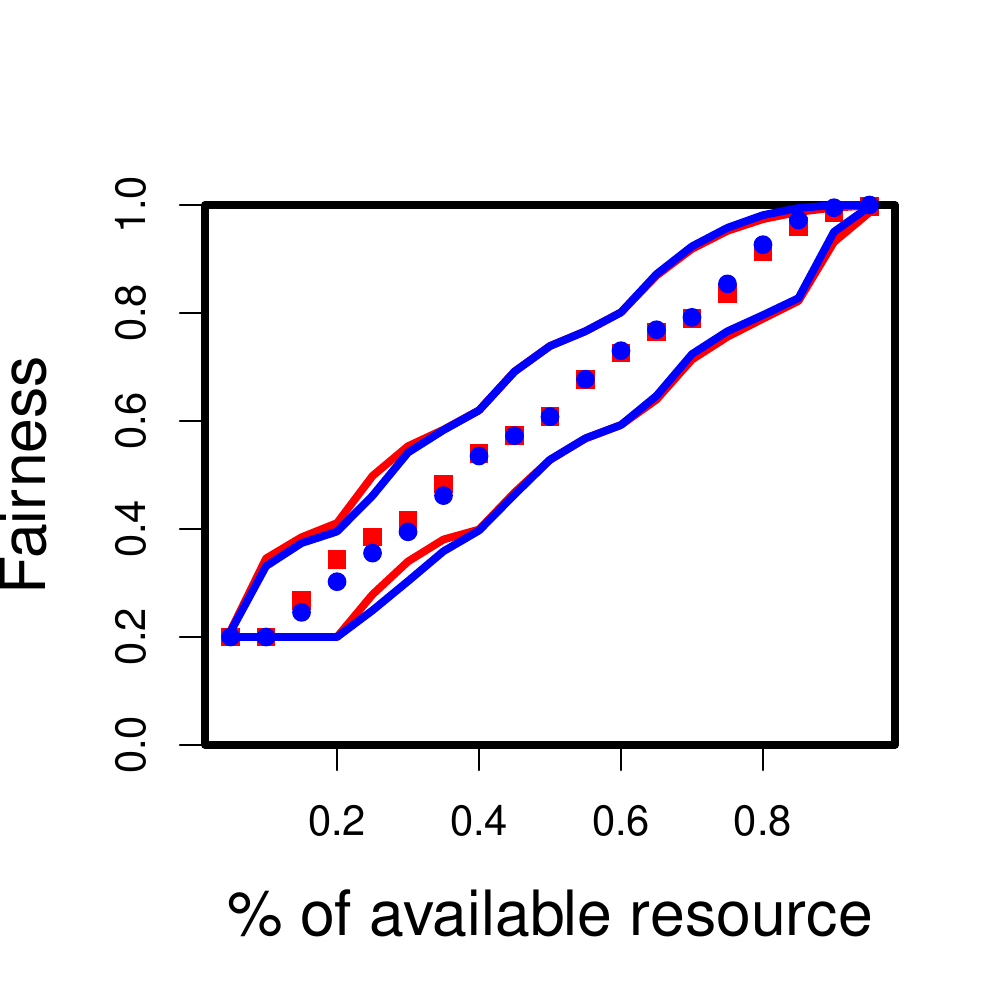}%
		\label{cel5}}	
	\caption{Fairness as a function of $E$/demand (5 users, weibull)}
    \vspace{-0.5cm}
	\label{fig_sim2}
\end{figure}

In the 3-user scenario (Fig.~\ref{fig_sim1}) it is possible to notice differences between the result obtained by the classical Jain's index and our new players fairness (PF) index. It is worth recalling that the Jain's index has value 1 when the allocation is proportional while the PF index is 1 when the allocation is the mood value.  
We can notice that the PF index considers the MMF allocation as a fair one when the available resource is small (high congestion), i.e., when there are many greedy users. In fact the MMF allocation and the mood value, are close: in such cases, both have the property of treating equally the greedy claimant, giving them the same portion of resource, independently of their demands. Instead, when $E$ increases, the MMF one is not fair any longer because it satisfies more the two users with less claim while it gives the minimal right to the one with bigger claim; in fact, in such cases the mood value becomes closer to the Proportional allocation, to the Shapley value and to the Nucleolus. The similarity between the Proportional allocation and the mood value is due to the fact that the correct way to measure the satisfaction of moderate players is through the DFS rate and increasing $ E  $ the number of moderate players increases. It follows that the allocation equalizing the DFS rates, i.e., the Proportional one, is close to the one equalizing the PS rates of user, i.e., the Mood Value. 

With 5 users (Fig.~\ref{fig_simrand} and Fig.~\ref{fig_sim2}), we can notice that, due to the fact that the group \textsc{Mg} is little, when $ E $ reach the $ 40\%  $ of the global demand the Proportional is equal to the Mood value and it shows a PF index equal to 1. Furthermore the MMF allocation show again a high (PF) fairness when the resource is little ($ 5\% $), i.e., when there are many greedy players.

\begin{figure*}[t]
	\centering
	\vspace{-0.6cm}
	\subfloat[4 users,  $E$/demand = 5\%]{\includegraphics[height=3.25cm]{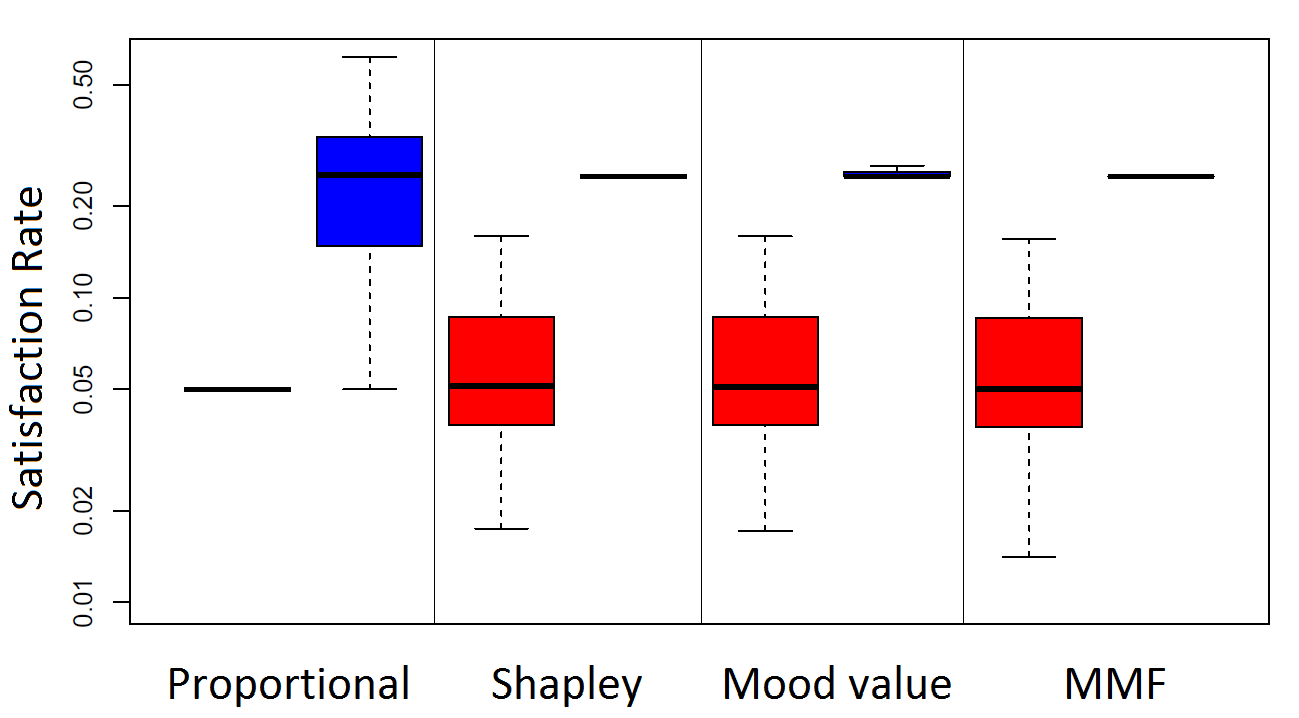}%
		\label{sodd_4}}
	\hspace{-0.1cm}
	\subfloat[8 users, $E$/demand = 5\%]{\includegraphics[height=3.25cm]{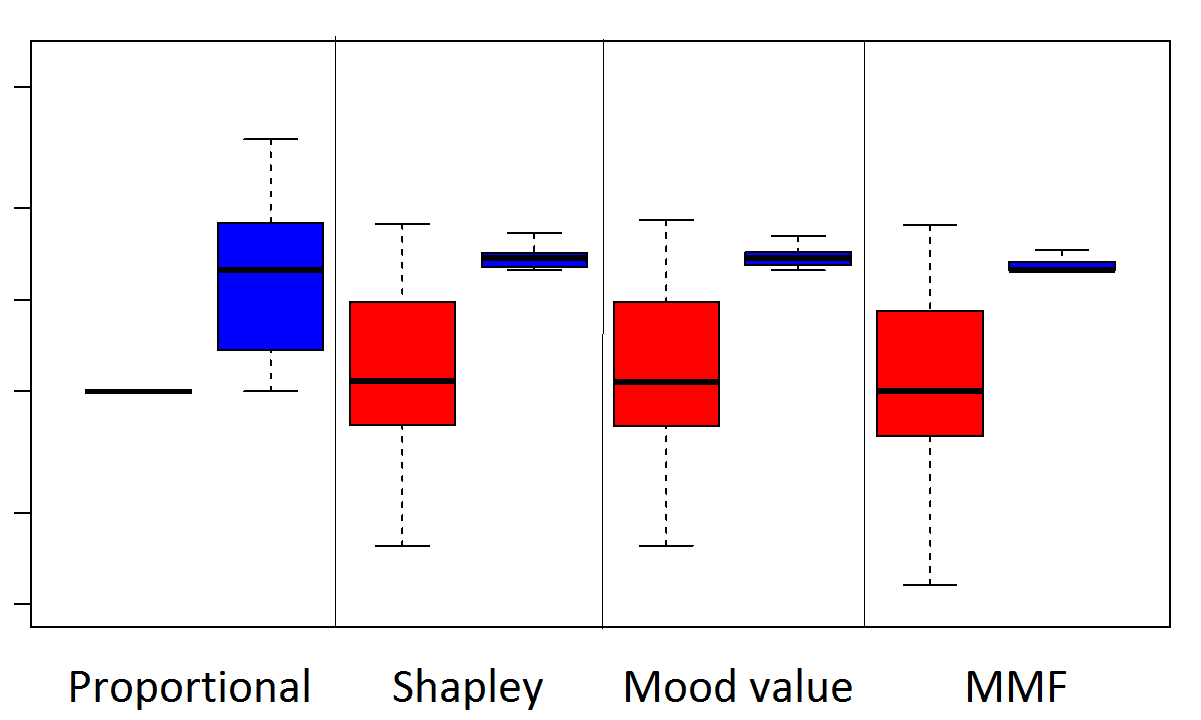}%
		\label{sodd_8}}
	\hspace{-0.1cm}
	\subfloat[16 users, $E$/demand = 5\%]{\includegraphics[height=3.25cm]{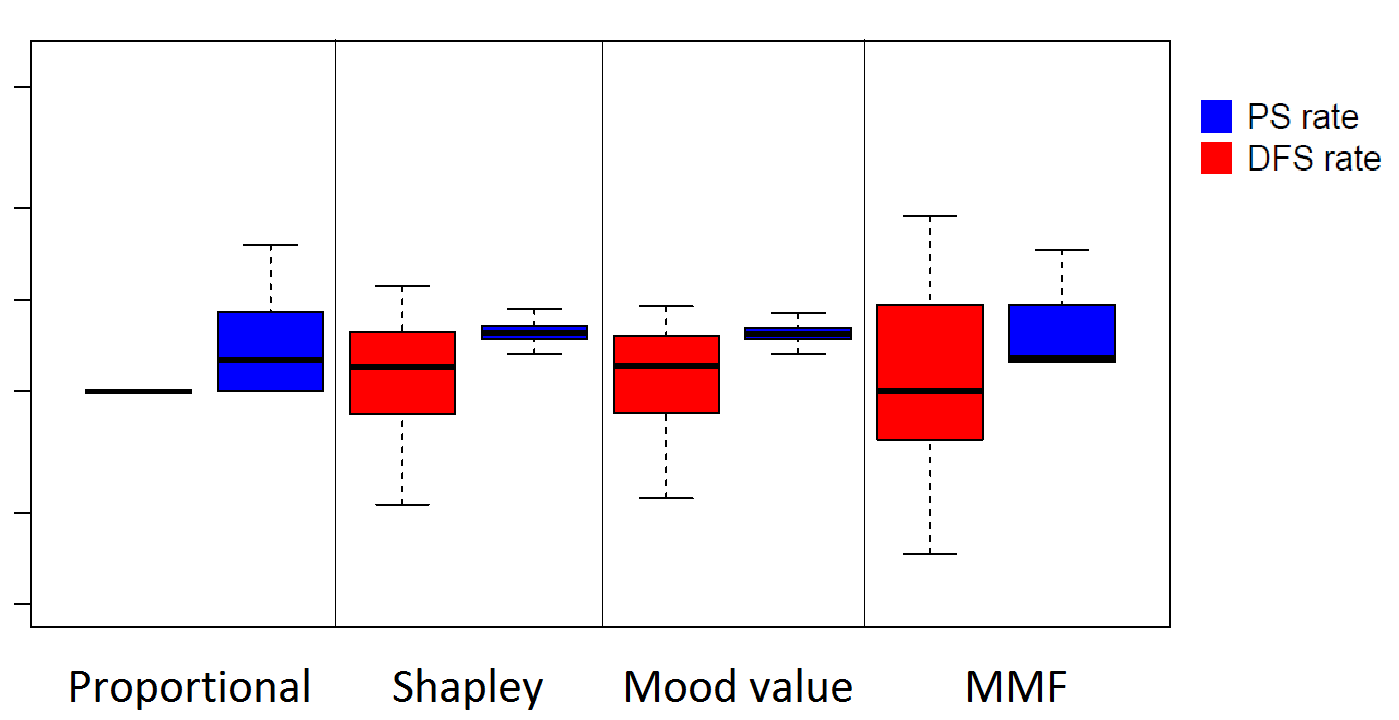}%
		\label{sodd_16}}
      \vspace{-0.05cm}  	
\subfloat[4 users, $E$/demand = 95\%]{\includegraphics[height=3.25cm]{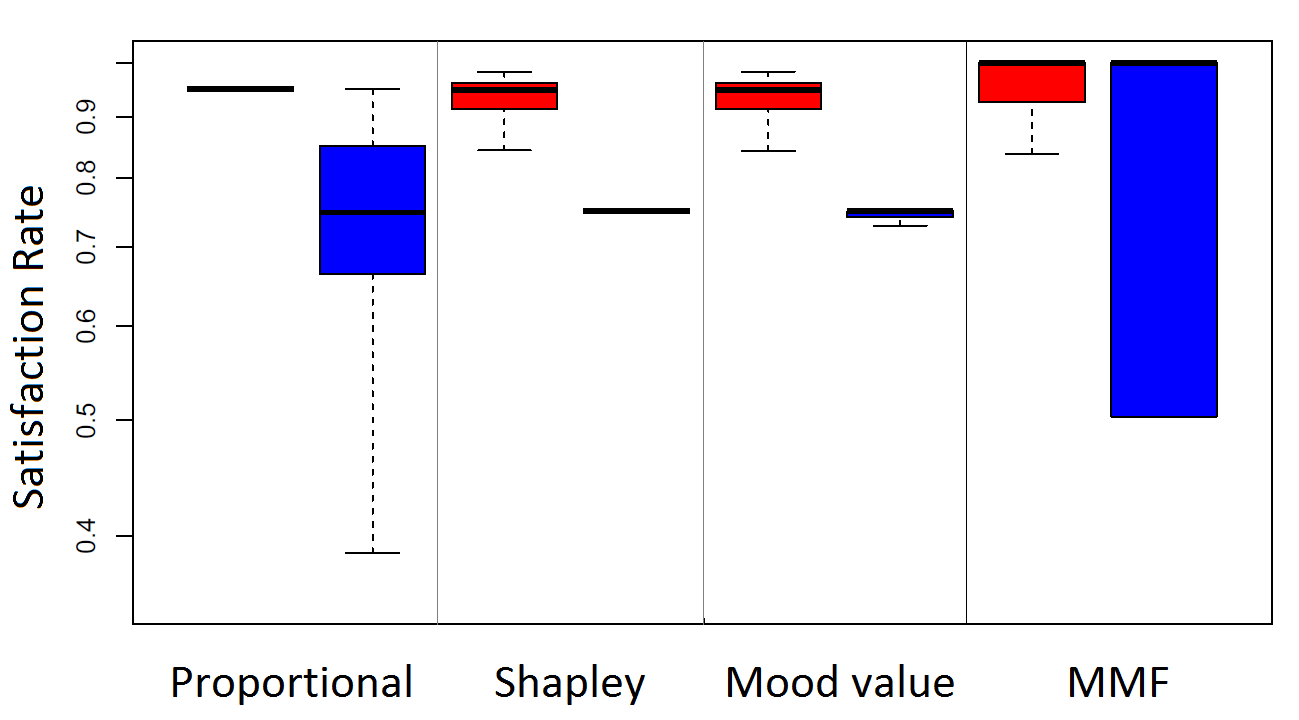}%
		\label{sodd_4_h}}
	\hspace{-0.1cm}
	\subfloat[8 users, $E$/demand = 95\%]{\includegraphics[height=3.25cm]{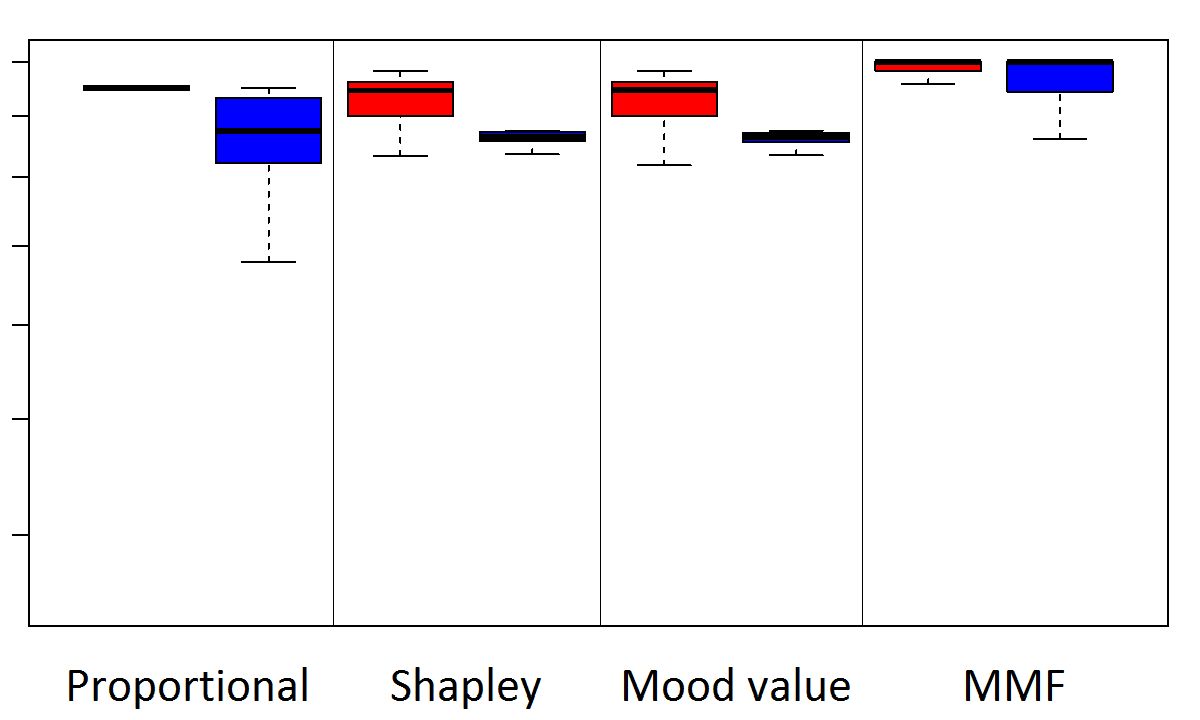}%
		\label{sodd_8_h}}
	\hspace{-0.1cm}
	\subfloat[16 users, $E$/demand = 95\%]{\includegraphics[height=3.25cm]{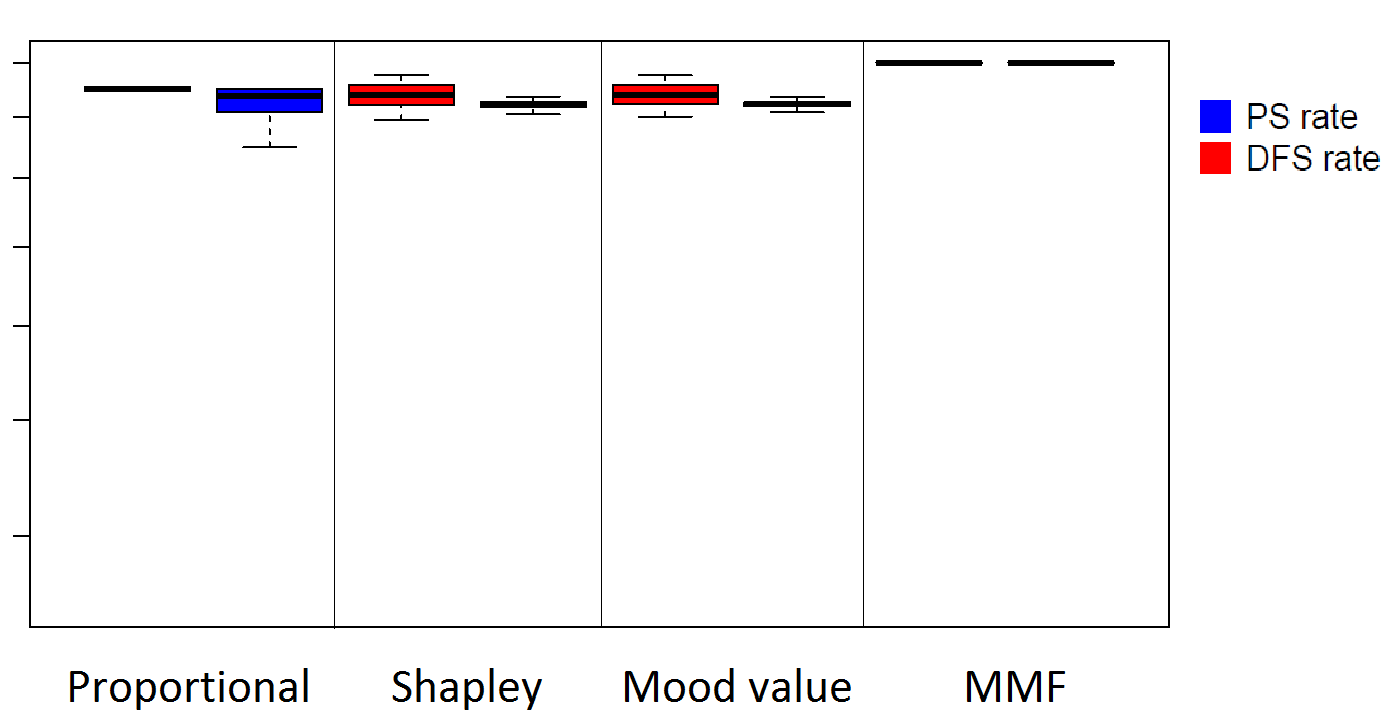}%
		\label{sodd_16_h}}
		\caption{Users satisfaction rates - $E$ over global demand = 5\% and 95\% - uniform demand distribution}
        \vspace{-0.5cm}
		\label{rateslow}
	\end{figure*}
    
With a second round of simulations we want to see how the Demand Fraction (DFS) and Player (PS) Satisfaction rates are distributed with an increasing number of users. Results are shown in Figure~\ref{rateslow} as box-plots (minimum, quantiles and maximum, without outliers), for two regimes (high congestion of a 5\% $E$/demand ratio, and low congestion of a 95\% ratio). We can notice that the value of the satisfaction is low when the resource is small (Fig..~\ref{rateslow}a,b,c) while it is higher when the congestion is low (Fig.~\ref{rateslow}d,e,f).
With 4 users, in terms of user satisfaction rate, the mood value is close to the MMF allocation when the congestion is high, and to the proportional allocation when the congestion is low. As the number of players grows, the absolute difference between allocation in terms of distribution of the satisfaction decrease. In both the congestion situations ($E$ equal to 5\% and 95\% of the demand) the Shapley value is the closest allocation to the mood value in terms of PS rate.

%
%

Summarizing, the simulations show that the Mood Value is able to nicely weight the nature (greedy or moderate) of users and of user groups. In particular it is close to the MMF allocation when the resource is scarce and to the proportional allocation when the resource is close to the global demand. Furthermore, it is worth noticing that with respect to classical game-theoretical allocation rules (Shapley Value, Nucleolus), the results show that the Mood Value shows a similar good behavior in terms of fairness, with the key advantage of having a much lower computation time complexity.  


\section{Conclusion}
\label{concl}

We proposed a game-theoretical approach to analyze and solve resource allocation problems, going beyond classical approaches that do not explore the setting where users can be aware of other users' demand and the available resource.

In particular, we proposed a new way of quantifying the user satisfaction and a new fairness index as enhancement of the Jain's index, describing and comparing their mathematical properties in detail. Accordingly to these new concepts, we propose a new resource allocation rule that meets the goal of providing the fairest resource allocation, we called the \lq Mood Value\rq, which we position with respect to game theory metrics as well the common theory of fair allocation in networks.
Finally, we test our ideas via numerical simulations of representative demand distributions, showing at which extent the mood value can approach and differ from max-min-fairness, weighted proportional, constrained-equal loss, Shapley Value and Nucleolus allocations.

\section*{Acknowledgment}
This work was partially funded by the FED4PMR \lq investissement d'avenir\rq \ project. The authors would like to thank Sahar Hoteit for her suggestions, and Deep Medhi and Catherine Rosenberg for their useful feedback.
\label{concl}


\end{document}